%
\documentclass[runningheads]{llncs}
\usepackage[T1]{fontenc}
%
\usepackage{float}
\usepackage{graphicx}
\usepackage{mathtools}
\usepackage{amsmath}
\usepackage{enumitem}
\usepackage{mdframed}
\usepackage{caption}

\usepackage [
    n, 
    advantage,
    operators,
    sets,
    adversary,
    landau,
    probability,
    notions,
    logic,
    ff, 
    mm,
    primitives,
    events,
    complexity,
    oracles,
    asymptotics,
    keys
]{cryptocode}

\newcommand{\calA}{\mathcal{A}}

\newcommand{\calE}{\mathcal{E}}

\newcommand{\calP}{\mathcal{P}}

\newcommand{\calT}{\mathcal{T}}

\newtheorem{assumption}{Assumption}
%
%
\makeatletter
\newcommand{\printfnsymbol}[1]{%
  \textsuperscript{\@fnsymbol{#1}}%
}
\makeatother

\begin{document}
\title{Sealed-bid Auctions on Blockchain with Timed Commitment Outsourcing}

%
%
\author{Jichen Li\inst{1}\thanks{Equal contribution, listed in alphabetical order.}\and
Yuanchen Tang\inst{1}\printfnsymbol{1}\and
Jing Chen\inst{2}\and
Xiaotie Deng\inst{1}
}
\authorrunning{Jichen Li\and
Yuanchen Tang\and
Jing Chen\and
Xiaotie Deng}
%
\institute{Peking University
\email{limo923@pku.edu.cn, shtangyuanchen@stu.pku.edu.cn, xiaotie@pku.edu.cn} \and
Tsinghua University
\email{jchencs@tsinghua.edu.cn}
}
%
\maketitle              
\begin{abstract}
Sealed-bid auctions play a crucial role in blockchain ecosystems. Previous works introduced viable blockchain sealed-bid auction protocols, leveraging timed commitments for bid encryption. However, a crucial challenge remains unresolved in these works: Who should bear the cost of decrypting these timed commitments?

This work introduces a timed commitment outsourcing market as a solution to the aforementioned challenge. We first introduce an aggregation scheme for timed commitments, which combines all bidders' timed commitments into one while ensuring security and correctness and allowing a varying number of bidders. Next, we remodel the utility of auctioneers and timed commitment solvers, developing a new timed commitment competition mechanism and combining it with the sealed-bid auction to form a two-sided market. The protocol includes bid commitment collection, timed commitment solving, and payment. Through game-theoretical analysis, we prove that our protocol satisfies Dominant Strategy Incentive Compatibility (DSIC) for bidders, Bayesian Incentive Compatibility (BIC) for solvers, and achieves optimal revenue for the auctioneer among a large class of mechanisms. Finally, we prove that no mechanism can achieve positive expected revenue for the auctioneer while satisfying DSIC and Individual Rationality (IR) for both bidders and solvers.

\keywords{Decentralized Auctions, Blockchain, Timed Commitment, Mechanism Design}
\end{abstract}
\section{Introduction}

A sealed-bid auction, where all bidders simultaneously submit their confidential bids to the auctioneer, is highly important in the realm of auctions.
In the most widely known variant of the second-price sealed-bid auction, also known as the Vickrey auction~\cite{vickrey1961counterspeculation}, 
the auctioneer sells the item to the highest bidder at the second-highest price, where bidders' optimal strategy is proven to report their values truthfully. However, a fundamental premise of sealed-bid auctions is that the auctioneer must be a trusted party, or the mechanism must be credible, a condition that has been proven unattainable through mechanism design alone~\cite{akbarpour2020credible}. 
This has impelled research into using cryptography and blockchain to prevent auctioneer misbehaviors~\cite{kosba2016hawk,blass2018strain,li2021blockchain,david2022fast}.

However, in these protocols, the transparency of blockchain requires users to complete bidding by first sending encrypted commitments, which are then revealed once all bidders have committed. This approach introduces two problems: 1. a bidder may maliciously refuse to reveal their commitment after seeing others' bids; 2. miners may maliciously prevent bidders' commitment-revealing transactions from being included in the blockchain (similar attacks have been studied in the PCN domain~\cite{wadhwa2022he}). To simultaneously address both issues, Tyagi et al.~\cite{tyagi2023riggs} innovatively employed timed commitments combined with efficient range proofs to construct sealed-bid auctions on the blockchain. 
However, the study left open a critical problem:

\emph{Who bears the cost of computing these timed commitments?}

The importance of this question lies in the fact that the decryption costs in the commitment mechanism not only affect the utility of bidders and auctioneers, potentially altering the auction mechanism and bidders' bidding strategies, but also impact the security of the entire auction mechanism. 
For example, suppose the system has only one solver available for timed commitments, and each commitment invocation has an individual cost of $c$. 
If bidders are required to bear a portion of $c$, then running the Vickrey auction may no longer satisfy Individual Rationality (IR), as the second price plus the decryption cost might exceed the first price. 
Conversely, if the auctioneer bears a portion of individual cost $c$, 
a single bidder can collude
with the solver to extract the auctioneer's payment without actually solving the commitment.
However, if we aggregate timed commitments into one, the solver can only extract payment by colluding with all bidders, which is much harder.
In sum, the design of the timed commitment solving mechanism is a crucial component of decentralized sealed-bid auctions on the blockchain, and participant incentives must be carefully examined.

    



Building on top of~\cite{tyagi2023riggs}, in this paper, we remodel the incentives of participants in blockchain-based sealed-bid auctions as a two-sided game.
We generalize the timed commitment scheme from Malavolta et al.~\cite{malavolta2019homomorphic} and propose an aggregatable timed commitment scheme.
This scheme combines multiple bidders' commitments into one, addressing the issue of collusion between bidders and solvers.
Compared to the scheme by Glaeser et al.~\cite{glaeser2023cicada}, which is also aggregatable, our scheme supports auction scenarios with varying numbers of bidders with only a one-time setup. 

By introducing a commitment outsourcing competition on the auctioneer-solver side, we design a pricing mechanism to pay the solver for decrypting timed commitments. 
The challenge in designing this mechanism is that the auctioneer must consider both the decryption cost and the decryption time, making pricing mechanisms that only care about monetary revenue unsuitable. 
This is because, until decryption is complete, the bids are effectively locked in the auction, which could affect their bids in other auctions.
We consider this issue and redesign the utility functions of the auctioneer and solver, drawing on Bayesian auction theory to design a decryption bidding mechanism. 
We propose solutions that satisfy the Bayesian Incentive Compatibility(BIC) and Bayesian Individual Rationality(BIR) properties for the solver. 
The major advantage of our proposed schemes is that they ensure bidders' bidding strategies remain unaffected, meaning our decryption mechanism supports any sealed-bid auction protocol without compromising the auction's properties or the protocol's security.
Our mechanism achieves maximum auctioneer revenue in a large class of mechanisms, and its optimality is also shown by a complementing impossibility result in achieving DSIC and IR for both bidders and solvers.

In summary, our contributions include:
\begin{itemize}
    \item We introduce an aggregatable timed commitment scheme, which supports auction scenarios with varying numbers of bidders with only a one-time setup.

    \item We identify the game-theoretical model for decentralized sealed-bid auctions with timed commitment outsourcing as a two-side game and remodel participants' utility functions.
        
    \item We propose a mechanism that satisfies DSIC and IR for bidders, BIC and BIR for solvers, while achieving optimal revenue for the auctioneer in a large class of mechanisms.

    \item We present a complementing impossibility result showing that no mechanism can achieve positive expected revenue for the auctioneer, while satisfying DSIC and IR for both bidders and solvers.
\end{itemize}

\section{Preliminary: Timed Commitment}

We introduce timed commitment, a key component of our protocol, which allows bidders to commit to a bid while keeping it sealed and confidential until the designated reveal time.

\begin{definition}[Timed Commitment~\cite{rivest1996timelock}]
	A timed commitment scheme is a tuple of three algorithms $<PSetup, PGen, PSolve>$, defined as follows:
	\begin{enumerate}
		\item $pp \leftarrow PSetup(1^{\lambda}, \calT)$: An algorithm that takes a security parameter $1^{\lambda}$, and a hardness parameter $\calT$ as input, and outputs public parameters $pp$.
		\item $Z \leftarrow PGen(pp, e)$: A probabilistic algorithm that takes as input a public parameter $pp$ and a secret $e \in \calE$, where $\calE$ is the secret space, and outputs a commitment $Z$.
		\item $e \leftarrow PSolve(pp, Z)$: A deterministic algorithm that takes as input a commitment $Z$ and outputs a decryption secret $e$.
	\end{enumerate}
\end{definition}

A timed commitment scheme must satisfy the following two properties:

\noindent\textbf{Correctness.}
    For all $\lambda \in \NN$ and all polynomials $\calT$ in $\lambda$, let $pp = PSetup(1^{\lambda}, \calT)$. 
    A timed commitment scheme is correctness if, for all secret $e \in \calE$, it always holds that $e = PSolve(PGen(pp, e))$.

\noindent\textbf{Security.}
	A timed commitment scheme is secure with gap $\epsilon$ if there exists a polynomial $\tilde{\calT}(\cdot)$ such that for all polynomials $\calT(\cdot) \geq \tilde{\calT}(\cdot)$ and every polynomial-size adversary $\calA = \{\calA_{\lambda}\}_{\lambda \in \NN}$ of depth $\leq \calT^{\epsilon}(\lambda)$, there exists a negligible function $\mu(\cdot)$, such that for all $\lambda \in \NN$ it holds that
	\begin{align*}
		Pr[e \leftarrow \calA(Z) : Z \leftarrow PGen(PSetup(1^{\lambda}, \calT(\lambda)), e)] \leq \frac{1}{2} + \mu(\lambda).
	\end{align*}

Additionally, we also consider other important functionalities of timed commitments, including homomorphism, aggregability, and non-malleability.

\noindent\textbf{Homomorphic.} The homomorphic property of timed commitment was first introduced by Malavolta et al.~\cite{malavolta2019homomorphic}. It allows specific types of homomorphic operations on encrypted commitments, ensuring both correctness and security. The definition is as follows:

\noindent\textbf{$Z' \leftarrow Eval(pp, C, Z_1, ..., Z_k)$.} An algorithm takes public parameters, a circuit $C: \calE^k \rightarrow \calE$, and commitments $Z_1, ..., Z_k$ as input, and outputs a commitment $Z'$ such that the corresponding secret $e' = C(e_1, ..., e_k)$.

\noindent\textbf{Aggregability.} In this paper, we introduce the aggregability of timed commitment, which means a manager can aggregate multiple commitments into one while maintaining correctness and security. It should contain two algorithms $PAggr$ and $ASolve$:

\noindent\textbf{$Z^{A} \leftarrow PAggr(pp, Z_1, ..., Z_m)$.} An algorithm takes public parameters and commitments $Z_1, ..., Z_k$ as input, and outputs one commitment $Z^{A}$.

\noindent\textbf{$(e_1, ..., e_m) \leftarrow ASolve(pp, Z^{A}, m)$.} An algorithm takes public parameters, aggregate commitment $Z^{A}$ and the number of aggregate $m$ as input, and outputs $m$ secrets satisfied $e_i = PSolve(pp, Z_i)$.

\noindent\textbf{Non-malleability.} The non-malleability of timed commitment means that even if an adversary intercepts a commitment, they cannot create a new commitment whose solution would reveal any information about the original solution. This property is crucial for maintaining the security and integrity of timed commitment, particularly in scenarios where they are used to delay access to information or to secure timed releases~\cite{freitag2021nonmalleadle}.

\section{Model}
\subsection{Our Model for Timed Commitment Auction}
We consider a single-item auction scenario. 
Our timed commitment auction model involves three types of participants: \emph{bidder}, \emph{auctioneer}, and \emph{solver}.

\noindent\textbf{Bidder.} A bidder is an entity participating in the auction, aiming to obtain the auctioned item by placing bids. In an auction, suppose there are $m$ bidders, and each bidder $i \in [m]$ has a private value $v_i$ for the auctioned item. 
A bidder should design a bidding strategy to determine their bid $b_i$ according to her private value.

\noindent\textbf{Auctioneer.} The auctioneer organizes the auction, aiming to sell the item and maximize revenue. However, unlike traditional auctions, where the result is immediately available after receiving all bidders' bids, in a timed commitment auction, the auctioneer can only obtain the result and revenue after the commitments are revealed. Therefore, in our model, the auctioneer aims to maximize \emph{revenue per unit time}, as defined in Definition~\ref{def:auctioneer}.

\noindent\textbf{Solver.} Our model includes $n$ solvers, numbered as $\{1, \dots, n\}$, responsible for providing computational power to decrypt the commitment and receive payment.
In practical applications, timed commitment solving typically uses specialized hardware (such as GPU or TPU), each with two attributes: computation speed $d$ and unit computation cost $c$.
Without loss of generality, we assume that higher-cost machines perform at faster speeds and that the relationship can be represented by a cost-speed function $d = g(c)$, which is monotonically non-decreasing.
The function $g$ is the same for all solvers and is publicly known, representing an objective relationship between hardware cost and speed.
In each auction, a solver can choose only one hardware to solve the timed commitment, but the actual computation cost may vary due to different hardware acquisition channels or electricity costs.
Specifically, we assume that each solver $j \in [n]$ has a private discount ability $l_j \in [1, \bar{l}]$, allowing them to use hardware with speed $g(c)$ at a lower cost $\frac{c}{l_j}$.
Here, $\bar{l}$ is an arbitrarily large upper bound used for convenience in the proof.
Thus, each solver $j$ should strategically choose their computation cost $c_j$ in order to maximize their profit.

A complete timed commitment auction mechanism requires the auctioneer to determine a tuple $<\calA, \calP^{b},\calP^{s}>$ which are the \textbf{Allocation Rule}, \textbf{Bidder Payment Rule}, and \textbf{Solver Payment Rule}, defined as follows:

\begin{definition}[Timed Commitment Auction] 
	A timed commitment auction is defined by determining a tuple $<\calA, \calP^{b}, \calP^{s}>$, where:
 
	\noindent\textbf{Allocation Rule} $\calA: R^m \rightarrow \{0,1\}^m$ maps bidders' reported bids $\vec{b} = (b_1, \dots, b_m)$ to an $m$-dimensional vector $(a_1(\vec{b}), \dots, a_m(\vec{b}))$ satisfying $\sum_{i=1}^m a_i(\vec{b}) = 1$, where $a_i(\vec{b})$ is the probability of the item allocated to bidder $i$. 
	
	\noindent\textbf{Bidder Payment Rule} $\calP^{b}: R^m \rightarrow R^m$ maps bidders' bid $\vec{b}$ to an $m$-dimensional vector $(p^b_1(\vec{b}), \dots, p^b_m(\vec{b}))$, where $p^b_i(\vec{b})$ is the amount of payment that bidder $i$ should pay to the auctioneer.
				
	\noindent\textbf{Solver Payment Rule} $\calP^{s}: R^{2n} \rightarrow R^n$ maps bids $\vec{b}$ and solvers' computation costs $\vec{c} = (c_1, \dots, c_n)$ to an $n$-dimensional vector $(p^s_1(\vec{b}, \vec{c}), \dots, p^s_n(\vec{b}, \vec{c}))$, where $p^s_j(\vec{b}, \vec{c})$ is the amount of payment made by the auctioneer to solver $j$.
\end{definition}

We now define the utility function for each type of participant.

\begin{definition}[Bidder Utility]
\label{def:bidder}
	Given the allocation rule $\calA$ and bidder payment rule $\calP^b$, the utility of bidder $i \in [m]$ under bids $\vec{b}$ and private value $v_i$ is defined as
	\begin{align*}
		u^b_i(\vec{b}) = a_i(\vec{b}) \cdot v_i - p^b_i(\vec{b}).
	\end{align*}
\end{definition}

\begin{definition}[Solver Utility]
\label{def:solver}
	Given the solver payment rule $\calP^s$, the utility of solver $j \in [n]$ under the puzzle complexity parameter $\calT$, bids $\vec{b}$, computation costs $\vec{c}$ and discount ability $l_j$ is defined as
	\begin{align*}
		u^s_j(\vec{b}, \vec{c}) = p^s_j(\vec{b}, \vec{c}) - \frac{\calT c_j}{l_j}.
	\end{align*}
\end{definition}

The auctioneer receives the payment from bidders and pays solvers for commitment revealing. 
We consider the auctioneer's revenue per unit of time.

\begin{definition}[Auctioneer Revenue]
\label{def:auctioneer}
	Given the bidder payment rule $\calP^b$ and the solver payment rule $\calP^s$, the revenue of the auctioneer under bids $\vec{b}$, computation cost $\vec{c}$, cost-speed function $g$, and time parameter $\calT$ is defined as
	\begin{align*}
		R = \left(\sum_{i=1}^m p^b_i(\vec{b}) - \sum_{j=1}^n p^s_j(\vec{b}, \vec{c})\right) \cdot \frac{g(\max{\vec{c}})}{\calT}.
	\end{align*}
\end{definition}

\subsection{Solution Concepts}

In the study of timed commitment auction mechanisms, we utilize classic game theory solution concepts to analyze participants' strategies. 
To unify the notation for bidders and solvers, 
We use types $\vec{\theta}$ to represent both bidders' private value $\vec{v}$ and solvers discount ability $\vec{l}$.
Also, we use strategy $\vec{s}$ to present both the bidder's bidding strategy $\vec{b}$ and solver costs decision $\vec{c}$.

We first recall the concepts of Nash Equilibrium and Dominant Strategy. 
See~\cite{roughgarden2010algorithmic}, e.g., for more discussions about them.

\noindent\textbf{Nash Equilibrium.}
	A strategy profile $(s_1,\dots,s_n)$ is a Nash Equilibrium (NE) if no player has an incentive to deviate from their strategy unilaterally. That is, for any player $i$ and any $s_i'$, 
    $
    u_i(s_i,\vec{s}_{-i}) \ge u_i(s_i',\vec{s}_{-i}),
    $
    where $\vec{s}_{-i}$ denotes the strategies of all other players except $i$.

\noindent\textbf{Dominant Strategy.}
	A strategy $s_i^*$ is a Dominant Strategy if $s_i^*$ is the best response for all $\vec{s}_{-i}$, i.e., for any $s_i'$ and $\vec{s}_{-i}$ satisfied,
    $
    u_i(s_i^*,\vec{s}_{-i}) \ge u_i(s_i',\vec{s}_{-i}).
    $

If a mechanism ensures that each participant can achieve the highest payoff by directly reporting their true preferences, we say the mechanism has dominant-strategy incentive compatibility (DSIC).

\begin{definition}[DSIC]
	If a dominant strategy exists for any participant regardless of others' private information, then the mechanism is dominant-strategy incentive-compatible (DSIC).
	Formally, a mechanism is DSIC if $\forall i$, $\exists s_i$ such that:
    $$
	u_i(s_i,\vec{s}_{-i})\ge u_i(s_i',\vec{s}_{-i}), \forall s_i',\vec{s}_{-i}.
    $$
\end{definition}

The concept of individual rationality (IR) is also important for auction design. 
This concept ensures that participants are not worse off by participating in the mechanism. 

\begin{definition}[IR]
	If a strategy exists for any participant to achieve a non-negative outcome regardless of other players' strategies, then the mechanism is individually rational.
    Formally, a mechanism is IR if $\forall i, \exists s_i$,
    $$
    u_i(s_i,\vec{s}_{-i})\ge 0, \forall \vec{s}_{-i}.
    $$
\end{definition}

A DSIC mechanism can simplify decision-making for participants and help auctioneers easily predict the revenue.
However, real-world auctions involve incomplete information, where participants do not know all the information about others. 
In such cases, we consider classic Bayesian auction assumption~\cite{myerson1981optimal} where participants have a common prior distribution over others' types. 

\begin{definition}[Bayes-Nash Equilibrium]
	A strategy profile $\vec{s} = (s_1,\dots,s_n)$ is a Bayes-Nash equilibrium if, for each player $i$, given their type $\theta_i$ and the strategies $\vec{s}_{-i}$ of the other players, the strategy $s_i(\theta_i)$ maximizes their expected utility. Formally, for each player $i$ with type $\theta_i$, we have:
	$$
	s_i(\theta_i) \in \arg\max_{s_i'} \mathbb{E}_{\vec{\theta}_{-i}}[u_i(s_i'(\theta_i), \vec{s}_{-i}(\vec{\theta}_{-i}))],
	$$
	where $\vec{s}_{-i}$ and $\vec{\theta}_{-i}$ denote the strategy profile and types of all players except player $i$, respectively.
\end{definition}

In this scenario, if every player using strategy under the truth type is a Bayes-Nash equilibrium, we say the mechanism satisfies Bayesian Incentive Compatibility (BIC). 
Also, when their expected utility is not minus, we say the mechanism satisfied Bayesian Individual Rationality (BIR).

\begin{definition}[BIC]
	If for every participant $i$, with any type $\theta_i$, and for any misreporting of their type $\hat{\theta}_i$, the expected utility under strategy $s_i(\theta_i)$ is at least as high as the expected utility under strategy $s_i(\hat{\theta}_i)$. The mechanism is Bayesian incentive-compatible. Mathematically expressed as:
	$$
	\mathbb{E}_{\vec{\theta}_{-i}} \left[ u_i(s_i(\theta_i), \vec{s}_{-i}(\vec{\theta}_{-i})) \mid \theta_i \right] \ge \mathbb{E}_{\vec{\theta}_{-i}} \left[ u_i(s_i(\hat{\theta}_i), \vec{s}_{-i}(\vec{\theta}_{-i})) \mid \theta_i \right],
	$$
	$\forall i$, $\theta_i$, and $\hat{\theta}_i$, where $\vec{\theta}_{-i}$ denotes the types of all other participants except $i$, and the expectation is taken over the distribution of $\vec{\theta}_{-i}$.
\end{definition}

\begin{definition}[BIR] If for every participant $i$, with any type $\theta_i$, the expected utility from participating in the mechanism is at least 0, the mechanism is Bayesian individually rational. Mathematically, this can be expressed as: 
$$
    \mathbb{E}_{\vec{\theta}_{-i}} \left[ u_i(s_i(\theta_i), \vec{s}_{-i}(\vec{\theta}_{-i})) \mid \theta_i \right] \ge 0,  \forall i,
$$
where $\vec{\theta}_{-i}$ denotes the types of all other participants except $i$, and the expectation is taken over the distribution of $\vec{\theta}_{-i}$. \end{definition}

	
\section{Blockchain Sealed-bid Auction Protocol}
\label{sec:pro}

\begin{figure}[!ht]
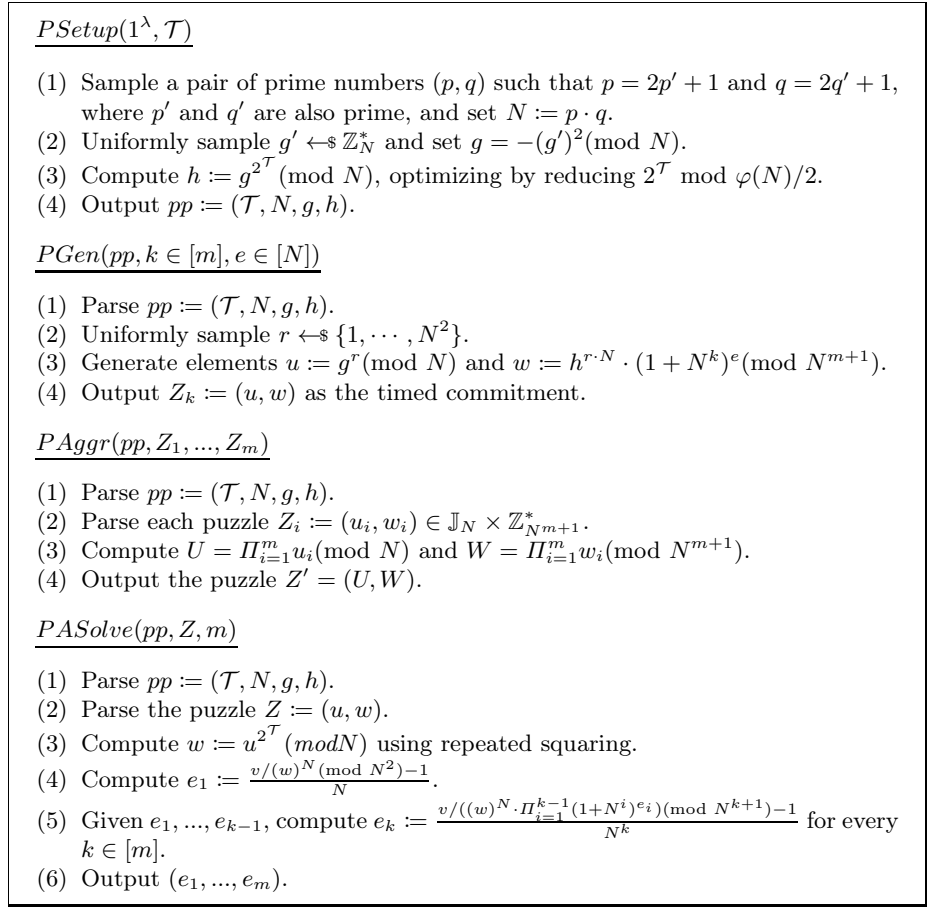

	\begin{mdframed}
		\underline{$PSetup(1^\lambda, \calT)$}
		\begin{enumerate}[label=(\arabic*)]
			\item Sample a pair of prime numbers $(p, q)$ such that $p = 2 p' + 1$ and $q = 2 q' + 1$, where $p'$ and $q'$ are also prime, and set $N \coloneq p \cdot q$.
			\item Uniformly sample $g' \sample \ZZ_N^*$ and set $g = - (g')^2 (\text{mod } N)$.
			\item Compute $h \coloneq g^{2^{\calT}} (\text{mod } N)$, optimizing by reducing $2^{\calT}$ mod $\varphi(N) / 2$.
			\item Output $pp \coloneq (\calT, N, g, h)$.
		\end{enumerate}
		
		\underline{$PGen(pp, k \in [m], e \in [N])$}
		\begin{enumerate}[label=(\arabic*)]
			\item Parse $pp \coloneq (\calT, N, g, h)$.
			\item Uniformly sample $r \sample \{1, \cdots, N^2\}$.
			\item Generate elements $u \coloneq g^r (\text{mod } N)$ and $w \coloneq h^{r \cdot N} \cdot (1 + N^k)^e (\text{mod } N^{m+1})$.
			\item Output $Z_k \coloneq (u, w)$ as the timed commitment.
		\end{enumerate}
		
		\underline{$PAggr(pp, Z_1, ... , Z_m)$}
		\begin{enumerate}[label=(\arabic*)]
			\item Parse $pp \coloneq (\calT, N, g, h)$.
			\item Parse each puzzle $Z_i \coloneq (u_i, w_i) \in \mathbb{J}_{N} \times \mathbb{Z}^*_{N^{m+1}}$.
			\item Compute $U = \Pi_{i=1}^m u_i (\text{mod } N)$ and $W = \Pi_{i=1}^m w_i (\text{mod } N^{m+1})$.
			\item Output the puzzle $Z' = (U, W)$.
		\end{enumerate}
		
		\underline{$PASolve(pp, Z, m)$}
		\begin{enumerate}[label=(\arabic*)]
			\item Parse $pp \coloneq (\calT, N, g, h)$.
			\item Parse the puzzle $Z \coloneq (u, w)$.
			\item Compute $w \coloneq u^{2^\calT} (\textit{mod} N)$ using repeated squaring.
			\item Compute $e_1 \coloneq \frac{v/(w)^N (\text{mod } N^2) - 1}{N}$.
			\item Given $e_1, ..., e_{k-1}$, compute $e_k \coloneq \frac{v/((w)^N \cdot \Pi_{i=1}^{k-1}(1+N^i)^{e_i}) (\text{mod } N^{k+1}) - 1}{N^k}$ for every $k \in [m]$.
			\item Output $(e_1, ..., e_m)$.
		\end{enumerate}
	\end{mdframed}
	\caption{Aggregable Timed Commitment Scheme.}
	\label{fig:AHTLP}
\end{figure}

In this section, we introduce our timed commitment auction protocol on the blockchain, including the construction of aggregable homomorphic timed commitments, the bidding protocol for bidders, and the commitment unveiling protocol for solvers.

\subsection{Aggregable Homomorphic Timed Commitment}
\label{sec:ATLP}

Our aggregable timed commitment is based on the timed commitment construction introduced in~\cite{malavolta2019homomorphic}, and the means of achieving aggregability were also briefly discussed in~\cite{thyagarajan2020verifiable}. The security of this construction relies on the complexity assumption of squaring operations in a finite field, as introduced by~\cite{rivest1996timelock}.

Let $N = p \cdot q$ be an RSA integer, where $p$ and $q$ are large primes. 
Let $N = p \cdot q$ be an RSA integer, where $p$ and $q$ are large primes. 
Malavolta et al.~\cite{malavolta2019homomorphic} construct a linear homomorphic timed commitment as $(N, T, x^r, (x^{N \cdot 2^T})^r \cdot (1 + N)^e)$, where $r$ is uniformly sampled from $\{1, \cdots, N^2\}$, and its distribution modulo $\varphi(N)$ is statistically close to uniform over $\{1, \cdots, \varphi(N)\}$.
This construction follows the idea that the group $\mathbb{Z}_{N^2}^*$ can be represented as a product of the group generated by $(1 + N)$ and the $N$-th residue group $\{x^N : x \in \mathbb{Z}_N^*\}$. 
It also lets $x \overset{\$}{\leftarrow} \mathbb{J}_N$ be the subgroup of $\mathbb{Z}_N^*$ where the Jacobi symbol of elements is $+1$, ensuring that adversaries cannot distinguish by it.

However, this construction does not yet satisfy the property of aggregability. We further extend it, observing that when $N$ is a strong RSA integer, the group $\mathbb{Z}_{N^{m+1}}^*$ can be decomposed according to the following Lemma.


\begin{lemma}
\label{lemma:group}
	When $N$ is a strong RSA integer, the group $\mathbb{Z}_{N^{m+1}}^*$ can be represented as the product of the following $m+1$ groups:
	\begin{itemize}
		\item The $N$-th residue group $\{x^N : x \in \mathbb{Z}_N^*\}$ (of order $\varphi(N)$).
		\item The group generated by $(1 + N^k)$ mod the multiplicative group $N^{k+1}$ for $k \in [1, m]$ (each of order $N$).
	\end{itemize}
\end{lemma}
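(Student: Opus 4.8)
\emph{Proof proposal.}
The plan is to reduce the claim to two ingredients: the coprimality $\gcd(N,\varphi(N)) = 1$, which is exactly where the strong RSA integer hypothesis is used, and the standard filtration of $\mathbb{Z}_{N^{m+1}}^*$ by the subgroups $1 + N^k\mathbb{Z}_{N^{m+1}}$.

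First I would record the orders and the coprimality. Since $\varphi(N^{m+1}) = N^m\varphi(N)$, it suffices to observe that $\gcd(N,\varphi(N)) = 1$: with $N = pq$, $p = 2p'+1$, $q = 2q'+1$, and $p,q,p',q'$ distinct primes, $\varphi(N) = 4p'q'$ is divisible neither by $p$ (it is odd, $p \nmid p-1$, and $p \ne q'$ so $p \nmid 2q'$) nor by $q$ (symmetrically), while $\gcd(p,q) = 1$. Then $\mathbb{Z}_{N^{m+1}}^*$ is a finite abelian group of order $N^m\varphi(N)$ whose two factors are coprime, so it is the internal direct product of its unique subgroup $G$ of order $N^m$ (the elements of $N$-power order) and its unique subgroup $Q$ of order $\varphi(N)$ (the elements of order coprime to $N$). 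I would then identify $G$ with $1 + N\mathbb{Z}_{N^{m+1}}$: the latter is the kernel of reduction modulo $N$, hence a subgroup of order $N^m$, and there is only one such. Consequently reduction modulo $N$ restricts to an isomorphism $Q \xrightarrow{\sim} \mathbb{Z}_N^*$, and since $x \mapsto x^N$ is an automorphism of $\mathbb{Z}_N^*$ (again by the coprimality) we get $\{x^N : x \in \mathbb{Z}_N^*\} = \mathbb{Z}_N^*$; thus $Q$ is the $N$-th residue group of the statement, which accounts for the first bullet.

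Next I would decompose $G = 1 + N\mathbb{Z}_{N^{m+1}}$ along the chain $G = H_1 \supseteq H_2 \supseteq \cdots \supseteq H_m \supseteq H_{m+1} = \{1\}$ with $H_k := 1 + N^k\mathbb{Z}_{N^{m+1}}$. The identity $(1 + N^k a)(1 + N^k b) = 1 + N^k(a+b) + N^{2k}ab \equiv 1 + N^k(a+b) \pmod{N^{k+1}}$, which holds because $2k \ge k+1$ for $k \ge 1$, shows that $1 + N^k c \mapsto c \bmod N$ is a surjective homomorphism $H_k \to \mathbb{Z}/N\mathbb{Z}$ with kernel $H_{k+1}$; hence each quotient $H_k/H_{k+1}$ is cyclic of order $N$, generated by the class of $1 + N^k$, i.e. by ``$(1+N^k)$ mod the multiplicative group $N^{k+1}$.'' Iterating (induction on $k$) shows that every $h \in G$ has a unique expansion $h \equiv \prod_{k=1}^m (1+N^k)^{e_k} \pmod{N^{m+1}}$ with $e_k \in \{0,\dots,N-1\}$: one reads $e_1$ off $h \bmod N^2$, divides out $(1+N)^{e_1}$ and reads $e_2$ off the result modulo $N^3$, and so on --- exactly the recursion performed by $PASolve$. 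Combining this with the previous paragraph, every $y \in \mathbb{Z}_{N^{m+1}}^*$ factors uniquely as an $N$-th residue times $\prod_{k=1}^m (1+N^k)^{e_k}$, which is the asserted description.

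The hard part is choosing the right reading of ``product of $m+1$ groups'': it cannot be an internal direct product of the subgroups $\langle 1+N^k\rangle$ of $\mathbb{Z}_{N^{m+1}}^*$, because $1+N^k$ has order $N^{m+1-k}$ there (not $N$) and $G$ is not elementary abelian once $m \ge 2$. The precise content is instead the unique ``$N$-adic digit'' parametrization above, equivalently the isomorphism $\mathbb{Z}_{N^{m+1}}^* \cong \mathbb{Z}_N^* \times \prod_{k=1}^m (H_k/H_{k+1})$ into factors of orders $\varphi(N), N, \dots, N$. The only other place a hypothesis is truly used is $\gcd(N,\varphi(N)) = 1$, which needs $p',q'$ to be primes distinct from $q,p$; without the strong structure the Sylow splitting, and hence the whole decomposition, can fail.
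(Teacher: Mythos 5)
Your proof is correct and follows essentially the same route as the paper's: both decompose $\mathbb{Z}_{N^{m+1}}^*$ into a copy of $\mathbb{Z}_N^*$ together with the chain of quotients $\mathbb{Z}_{N^{k+1}}^*/\mathbb{Z}_{N^k}^*$, each cyclic of order $N$ and generated by the class of $1+N^k$. You additionally supply the justifications the paper leaves implicit --- the coprimality $\gcd(N,\varphi(N))=1$ that underlies the splitting (and is the only place the strong-RSA hypothesis enters), and the observation that the ``product'' must be read via the filtration quotients $H_k/H_{k+1}$ rather than as an internal direct product of the subgroups $\langle 1+N^k\rangle$ of $\mathbb{Z}_{N^{m+1}}^*$ (where $1+N^k$ has order $N^{m+1-k}$) --- so your version is strictly more rigorous than the paper's sketch.
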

\begin{proof}
    For a strong RSA integer $N = p \cdot q$, we consider the following isomorphism  group decomposition:
    \begin{align*}
        \mathbb{Z}_{N^{m+1}}^* &\cong \mathbb{Z}_N^* \times                \mathbb{Z}_{N^2}^*/\mathbb{Z}_N^* \times           \mathbb{Z}_{N^3}^*/\mathbb{Z}_{N^2}^* \times \cdots \times \mathbb{Z}_{N^{m+1}}^*/\mathbb{Z}_{N^m}^*. \\
        &\cong \{x^N : x \in \mathbb{Z}_N^*\} \times \langle 1 + N \rangle \times \langle 1 + N^2 \rangle \times \cdots \times \langle 1 + N^m \rangle,
    \end{align*}
    where $\langle 1 + N^k \rangle$ denotes the multiplicative group module $N^{k+1}$ generated by $1 + N^k$. 
    
    We begin by noting that $\mathbb{Z}_N^*$ is the group of units modulo $N$, whose order is $\varphi(N)$. This leads to the first factor $\{x^N : x \in \mathbb{Z}_N^*\}$.

    We now examine the remaining quotient groups. For $k \in [m]$, consider the group $G_k = \mathbb{Z}_{N^{k+1}}^*/\mathbb{Z}_{N^k}^*$. The generator of this quotient group is $1 + N^k$, and its order is $N$ because $(1 + N^k)^N \equiv 1 \mod N^{k+1}$. Furthermore, each $G_k$ is independent, as they are constructed with distinct modulo, and the generators $1 + N^k$ possess different orders. 
    Therefore, $\mathbb{Z}_{N^{m+1}}^*$ can be represented as the direct product of these groups.
\qed\end{proof}
To prevent an attacker from gaining information through the Jacobi symbol, we also replace the group $\mathbb{Z}_N^*$ with the subgroup $\mathbb{J}_N$ when constructing the encryption.
We also note that the original algorithm $PSolve$ is a special case of algorithm $PASolve$ with $m = 1$.
The scheme is shown in Fig.~\ref{fig:AHTLP}.

\noindent\textbf{Correctness}
To verify the correctness of this construction, consider any $k \in [m]$. Let the final computed results be denoted as $\tilde{e}_1, \dots, \tilde{e}_m$. We aim to show that $\tilde{e}_k = e_k$. Indeed, we have:

\begin{align*}
	\tilde{e}_k &= \frac{\left( \frac{v}{(w^N) \cdot \prod_{i=1}^{k-1}(1+N^i)^{e_i}} \right) \bmod N^{k+1} - 1}{N^k} \\
	&= \frac{\left( h^{N r} \cdot \prod_{i=1}^{k}(1+N^i)^{e_i} \right) / \left( w^N \cdot \prod_{i=1}^{k-1}(1+N^i)^{e_i} \right) (\bmod N^{k+1}) - 1}{N^k} \\
	&= \frac{\left( g^{2^\mathcal{T}} \right)^{N r} \cdot (1+N^k)^{e_k} / \left( g^{2^\mathcal{T} \cdot r} \right)^N (\bmod N^{k+1}) - 1}{N^k} \\
	&= \frac{(1+N^k)^{e_k} \bmod N^{k+1} - 1}{N^k} = \frac{e_k N^k}{N^k} = e_k.
\end{align*}

\noindent\textbf{Security.}
The security of the aggregable timed commitment scheme relies on the following foundational assumptions, all of which are proposed in~\cite{malavolta2019homomorphic}:

\begin{assumption}
    Sequential Squaring: Let $N$ be a uniform strong RSA integer, $g$ be a generator of $\mathbb{J}_N$, and $T(\cdot)$ be a polynomial. Then there exists some $\epsilon$ such that for every polynomial-size adversary $A = \{A_{\lambda}\}_{\lambda \in N}$ who's depth is bounded from above by $T^{\epsilon}(\lambda)$, there exists a negligible function $\mu(\cdot)$ such that
    \[\Pr\left[
    \begin{array}{l}
    b \leftarrow \mathcal{A}(N,g, T(\lambda),x, y) \\
    \end{array} 
    :
    \begin{array}{l}
    x \overset{\$}{\leftarrow} \mathbb{J}_N; 
    b \overset{\$}{\leftarrow} \{0,1\} \\
    \text{if } b = 0 \text{ then } y \overset{\$}{\leftarrow} \mathbb{J}_{N} \\
    \text{if } b = 1 \text{ then } y := x^{2^{T(\lambda)}}
    \end{array}
    \right] \leq \frac{1}{2} + \mu(\lambda).\]
\end{assumption}

\begin{assumption} 
    Decisional Composite Residuosity on $\mathbb{Z}^*_{N^{k}}$: Let $N$ be a uniform strong RSA integer. Then for every polynomial-size adversary $A = \{A_{\lambda}\}_{\lambda \in N}$ there exists a negligible function $\mu(\cdot)$ such that
    \[\Pr\left[
    \begin{array}{l}
    b \leftarrow \mathcal{A}(N, y) \\
    \end{array} 
    :
    \begin{array}{l}
    x \overset{\$}{\leftarrow} \mathbb{Z}^*_N; 
    b \overset{\$}{\leftarrow} \{0,1\} \\
    \text{if } b = 0 \text{ then } y \overset{\$}{\leftarrow} \mathbb{Z}^*_{N^k} \\
    \text{if } b = 1 \text{ then } y := x^N (\text{mod } N^k)
    \end{array}
    \right] \leq \frac{1}{2} + \mu(\lambda).\]

\end{assumption}

\begin{assumption}
    Decisional Diffie-Hellman: Let $N$ be a uniform strong RSA integerr and $g$ be a generator of $\mathbb{J}_N$. Then for every polynomial-size adversary $A = \{A_{\lambda}\}_{\lambda \in N}$ there exists a negligible function $\mu(\cdot)$ such that
    \[\Pr\left[
    \begin{array}{l}
    b \leftarrow \mathcal{A}(N, g, g^x, g^y, g^z) \\
    \end{array} 
    :
    \begin{array}{l}
    (x,y) \overset{\$}{\leftarrow} \{1, ..., \varphi(N)/2 \}; 
    b \overset{\$}{\leftarrow} \{0,1\} \\
    \text{if } b = 0 \text{ then } z \overset{\$}{\leftarrow} \{1, ..., \varphi(N)/2 \} \\
    \text{if } b = 1 \text{ then } z := x\cdot y (\text{mod } \varphi(N)/2)
    \end{array}
    \right] 
    \leq \frac{1}{2} + \mu(\lambda).\]
\end{assumption}

By introducing the Decisional Composite Residuosity assumption on $\mathbb{Z}^*_{N^k}$ for $k \in \{2, \dots, m+1\}$, we can reduce the security of our scheme to the scheme studied in~\cite{malavolta2019homomorphic}, then utilize their Theorem 1 to establish the security proof.

\begin{theorem}
	Let $N$ be a strong RSA integer. If the sequential squaring assumption and DDH assumption hold over $\mathbb{J}_N$, and the DCR assumption holds over $\mathbb{Z}^*_{N^k}$ for $k \in \{2, \dots, m+1\}$, then scheme in Fig.~\ref{fig:AHTLP} is a secure aggregatable timed commitment scheme.
\end{theorem}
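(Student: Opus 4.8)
The plan is to verify the two defining properties of a timed commitment scheme for the construction of Fig.~\ref{fig:AHTLP}. Correctness is already established above: the displayed computation shows that $PASolve$ recovers every $e_k$ from a commitment of the canonical form $(u,w)=(g^{r},\,h^{rN}\prod_{k}(1+N^{k})^{e_k}\bmod N^{m+1})$, and since $PAggr$ merely multiplies the two coordinates, $PASolve$ applied to $Z'=(U,W)=\bigl(g^{\sum_i r_i},\,h^{N\sum_i r_i}\prod_{k=1}^{m}(1+N^{k})^{e_k}\bmod N^{m+1}\bigr)$ outputs exactly $(e_1,\dots,e_m)$. So the real work is the security claim: a depth-bounded adversary learns nothing about the committed values from a commitment or an aggregated commitment. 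First observe that both objects we must reason about --- an individual commitment $Z_k$ (the case $e_j=0$ for all $j\ne k$) and an aggregated commitment $Z'$ --- are distributed exactly like a single fresh canonical commitment to a vector $(e_1,\dots,e_m)$; for $Z'$ this uses the standard fact that $\sum_i r_i\bmod\varphi(N)/2$ is statistically close to uniform when each $r_i\sample\{1,\dots,N^2\}$, since $\varphi(N)/2=2p'q'\ll N^2$. Hence it suffices to prove that the canonical $m$-level commitment hides $(e_1,\dots,e_m)$ against adversaries of depth $\le\calT^{\epsilon}$.

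For that I would run a hybrid argument that peels off levels $m,m-1,\dots,2$ and bottoms out at the single-level scheme of~\cite{malavolta2019homomorphic}. Using the decomposition $\mathbb{Z}^{*}_{N^{m+1}}\cong\{x^{N}\}\times\langle 1+N\rangle\times\cdots\times\langle 1+N^{m}\rangle$ of Lemma~\ref{lemma:group}, define hybrid $H_j$ to be the game in which the components of $w$ in $\langle 1+N^{j+1}\rangle,\dots,\langle 1+N^{m}\rangle$ are replaced by independent uniform elements (equivalently, $w$ is re-randomized conditioned on its residue modulo $N^{j+1}$); then $H_{m}$ is the real game and $H_{1}$ is, by Lemma~\ref{lemma:group}, a single-level Paillier-style commitment to $e_1$ modulo $N^2$ together with independent noise in the higher components. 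The step $H_{j}\to H_{j-1}$ is justified by the Decisional Composite Residuosity assumption on $\mathbb{Z}^{*}_{N^{j+1}}$: a reduction receives a DCR challenge $y\in\mathbb{Z}^{*}_{N^{j+1}}$, plants $y$ as the $N$-residue part of $w$ modulo $N^{j+1}$, multiplies in $(1+N^{j})^{e_j}$ and the already-uniform higher levels, and hands the result to the distinguisher --- a genuine $N$-residue produces $H_{j}$, a uniform $y$ produces $H_{j-1}$, where one uses that multiplication by $1+N^{j}$ permutes $\langle 1+N^{j}\rangle$, so the top coordinate of $w$ is uniform once the $N$-residue part is. Finally $H_{1}$ is handled by Theorem~1 of~\cite{malavolta2019homomorphic}: under sequential squaring and DDH over $\mathbb{J}_{N}$ and DCR over $\mathbb{Z}^{*}_{N^{2}}$, it yields a threshold polynomial $\tilde{\calT}$ and a gap $\epsilon$ such that no depth-$\le\calT^{\epsilon}$ adversary distinguishes the commitment to $e_1$ from one to a random value. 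Composing the $m-1$ DCR steps with this base step --- and noting that every reduction is depth-preserving up to one-time preprocessing that depends only on $pp$ (in particular computing $h=g^{2^{\calT}}$ via $\varphi(N)$, exactly as $PSetup$ does) --- gives that the canonical commitment hides $(e_1,\dots,e_m)$ with the same threshold $\tilde{\calT}$ and gap $\epsilon$, which is the claimed security.

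The main obstacle I anticipate is making the peeling steps fully rigorous rather than just the high-level picture above. Modulo $N^{j+1}$ the mask $h^{rN}$ is not a function of $h^{r}\bmod N$ alone, and the factors in Lemma~\ref{lemma:group}'s decomposition cannot be sampled independently without the factorization of $N$; so the DCR reduction at level $j$ must plant its challenge so that simultaneously (i) the joint distribution of the lower levels $1,\dots,j-1$ is identical to that in the real commitment and (ii) the embedded $N$-residue composes correctly with the $(1+N^{j})$-shift on the top coordinate. Carrying this out cleanly calls for working with the base-$N$ digit expansion of elements of $\mathbb{Z}_{N^{j+1}}$ and tracking how $N$-th powering and multiplication by $1+N^{j}$ act on those digits. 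A secondary, purely routine obligation is to spell out the statistical-closeness claims for the distributions of $r$ and of $\sum_i r_i$ modulo $\varphi(N)/2$. I do not expect any genuinely new cryptographic idea beyond~\cite{malavolta2019homomorphic}: the content of the theorem is precisely that their single-level argument lifts, one level at a time, up the tower of rings $\mathbb{Z}_{N^{2}},\dots,\mathbb{Z}_{N^{m+1}}$.
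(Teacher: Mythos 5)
Your proposal follows essentially the same route as the paper, which itself offers only a one-sentence sketch: introduce the DCR assumption on $\mathbb{Z}^*_{N^k}$ for each $k \in \{2,\dots,m+1\}$ to peel the construction down level by level, reduce to the single-level scheme of Malavolta et al., and invoke their Theorem~1. Your hybrid argument and the technical caveats you flag are a faithful (and considerably more detailed) elaboration of exactly that plan.
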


\noindent\textbf{Range proofs for aggregable timed commitment.}
Aggregable timed commitments are also compatible with the range proof method presented in~\cite{tyagi2023riggs}. Specifically, the prover must first demonstrate that the secret in the aggregable timed commitment is equivalent to a Pedersen commitment through a group homomorphism proof as outlined in~\cite{thyagarajan2021efficient}. Subsequently, an efficient range proof for Pedersen commitments can be constructed, as detailed in~\cite{bunz2018bulletproofs}.


\noindent\textbf{Comparing with other possible solutions}
It is worth noting that our aggregate scheme differs from the packing protocol in Cicada~\cite{glaeser2023cicada}, which does not consider the change in the size of the group $N$ relative to bidder number $m$ after proposing the packing scheme. 
Another advantage of our scheme is that it maintains additive homomorphism under the same $k$-order, allowing a group of people to pool their money to buy a product together without revealing their bids before the auction finishes.
\subsection{Timed Commitment Auction on Bidder Side}

\begin{figure}[!ht]
    \begin{mdframed}
        \textbf{Initialization:} The auction is initialized with public parameters $(\calT, N, g, h) \leftarrow PSetup(1^{\lambda}, \calT)$.
        
    \underline{Phase 1: Auction Initiaton}
        \begin{enumerate}[label=(\arabic*)]
            \item The auctioneer publishes an auction smart contract on the blockchain at $t_0$.
            \item The auctioneer then publishes $t, t_{\text{Open}}$, for the deadline of bid collection phase and bid self-opening phase. 
        \end{enumerate}
    \underline{Phase 2: Bid collection}
        \begin{enumerate}[label=(\arabic*)]
            \item Each bidder $i$ chooses a bidding price $b_i$, according to their value $v_i$.
            \item To place a bid, a user must:
            \begin{enumerate}[label=(\alph*)]
                \item $(u_i, w_i) \leftarrow PGen(pp, b_i, i)$.
                \item Submit puzzle $Z_i = (u_i, w_i)$ with range proof $\pi_i$ to the auction smart contract.
            \end{enumerate}
            \item The auctioneer ends bid collection at time \(t_0 + t\).
        \end{enumerate}
    \underline{Phase 3: Bid Opening}
        \begin{enumerate}[label=(\arabic*)]
            \item Auctioneer aggregates the puzzles to $(U, W) \leftarrow PAggr(pp, Z_1, ..., Z_n, n)$
            \item Auctioneer set $Z = (U, V)$ and begins a timed commitment competition in Fig~\ref{fig:sol}, in which the aggregated timed commitment is force-opened by a solver.
            \item When the solver is decrypting the aggregated commitment, each bidder $i$ can self-opening its bid by sending a proof $\pi^i_{Open} = (b_i, r_i)$, where $r_i$ is the random number used when construct timed commitment, before time $t_0 + t + t_{\text{Open}}$.
        \end{enumerate}
        
        
    \underline{Phase 4: Payment Execution}
        \begin{enumerate}[label=(\arabic*)]
            \item Auctioneer get revealed bid by $\vec{b} \leftarrow ASolve(pp, Z, m)$.
            \item Auctioneer also gets vector of solvers' computation cost $\vec{c}$ according to their submit time.
            \item Auctioneer allocates the item by $\calA(\vec{b})$, collect bidders' payments by $\calP^b(\vec{b})$ and pays solvers by $\calP^s(\vec{b}, \vec{c})$.
            \item Auctioneer CLOSEs the auction.
        \end{enumerate}
        \textbf{Output:} Item allocation, and payment for bidders and solvers.
    \end{mdframed}
    \caption{Blockchain sealed-bid auction protocol for selling one item, using an aggregatable timed commitment to ensure the credibility of bids, and solver competition to ensure that all bids can be decrypted on time.}
    \label{fig:auction}
\end{figure}

Given a timed commitment auction mechanism $<\calA, \calP^b, \calP^s>$ and an Aggregable timed commitment scheme $(PSetup, PGen, PAggr, PSolve)$, the blockchain auction protocol consists of five phases (detailed in Figure~\ref{fig:auction}):

\noindent\textbf{Phase 0: System setup}. 
A trusted party runs $PSetup(1^{\lambda}, \calT)$ with a security parameter $\lambda$ and a hardness parameter $\calT$ to generate public parameters $pp$. 
This system setup phase can be executed once and supports multiple auctions.

\noindent\textbf{Phase 1: Auction initiation}. 
The auctioneer introduces the item, initiates an auction smart contract, and sets up duration times $t, t_{\text{Open}}$, indicating the deadlines for bid collection and bid self-opening (block numbers represent time in the blockchain). We denote the auction start time as $t_0$.

\noindent\textbf{Phase 2: Bid collection}. 
During this phase, each bidder $i \in m$ chooses a bidding price $b_i$ and submits a timed commitment as $Z_i \leftarrow PGen(pp, b_i, i)$. 
This phase will end at time $t_0 + t$.

\noindent\textbf{Phase 3: Bid opening}.
The auctioneer aggregates time commitments from bidders by $Z \leftarrow PAggr(pp, Z_1, ..., Z_m)$ and starts a timed commitment solving competition Fig.~\ref{fig:sol}.
Bidders can also self-open their sealed bids between time $t_0 + t$ and $t_0 + t + t_{\text{Open}}$.

\noindent\textbf{Phase 4: Payment execution}.
In this phase, the auctioneer retrieves all open bids from the timed commitment solving competition and known solvers' cost $\vec{c}$. 
The auctioneer then computes the auction result, allocates the item by $\calA(\vec{b})$, collects bidders' payments by $\calP^b(\vec{b})$, and pays solvers according to $\calP^s(\vec{b}, \vec{c})$.


\subsection{Timed Commitment Solving Competition}

\begin{figure}[!ht]
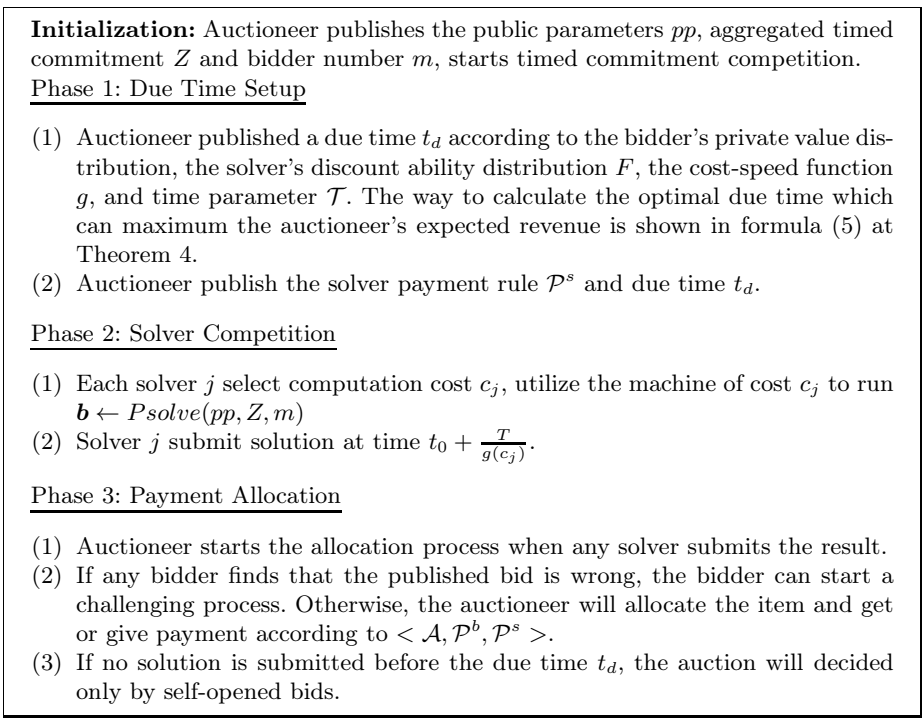

    \begin{mdframed}
        \textbf{Initialization:} Auctioneer publishes the public parameters $pp$, aggregated timed commitment $Z$ and bidder number $m$, starts timed commitment competition.
    
    \underline{Phase 1: Due Time Setup}
        \begin{enumerate}[label=(\arabic*)]
            \item Auctioneer published a due time $t_d$ according to the bidder's private value distribution, the solver's discount ability distribution $F$, the cost-speed function $g$, and time parameter $\calT$.
            The way to calculate the optimal due time which can maximum the auctioneer's expected revenue is shown in 
            formula~(\ref{formula:due-time}) at Theorem~\ref{thm:hbw}. 
            \item Auctioneer publish the solver payment rule $\calP^s$ and due time $t_d$.
        \end{enumerate}
    
    \underline{Phase 2: Solver Competition}
        \begin{enumerate}[label=(\arabic*)]
            \item Each solver $j$ select computation cost $c_j$, utilize the machine of cost $c_j$ to run $\vec{b} \leftarrow Psolve(pp, Z, m)$
            \item Solver $j$ submit solution at time $t_0 + \frac{T}{g(c_j)}$.
        \end{enumerate}
    
    \underline{Phase 3: Payment Allocation}
        \begin{enumerate}[label=(\arabic*)]
            \item Auctioneer starts the allocation process when any solver submits the result.
            \item If any bidder finds that the published bid is wrong, the bidder can start a challenging process. Otherwise, the auctioneer will allocate the item and get or give payment according to $<\calA, \calP^b, \calP^s>$.
            \item If no solution is submitted before the due time $t_d$, the auction will decided only by self-opened bids.
        \end{enumerate}
    \end{mdframed}
    
    \caption{Timed commitment competition protocol where solvers should compete for the speed for solving timed commitment to win the payment from the auctioneer.}
    \label{fig:sol}
\end{figure}

When the auctioneer announces the aggregate timed commitment and due time, the solver will begin a timed commitment solving competition. 
Notably, the competition's rewards are initially unknown to the solver; the exact value will only be revealed once the bids are solved. 
Therefore, at this stage, each solver must select their computation cost based on their discount ability and their estimates of other participants' types. 
Meanwhile, as proven in Section~\ref{sec:analysis} that allocating all payments to the first solver maximizes revenue for the auctioneer, the payment allocation phase of the protocol will commence immediately after the first solver submits the result. 
The detailed protocol is shown in Figure~\ref{fig:sol}.




\noindent\textbf{Bidder Verification}
During the protocol's execution, a solver may attempt to provide a false result to fraudulently obtain payment. 
In such cases, since the solver is required to disclose all bidders' bids, all bidders can serve as verifiers to validate the solver's result.
If any solver detects an incorrect result, they can initiate a challenge and reveal their own commitment to prove the solver's error, thereby earning a certain reward. Consequently, during the protocol's execution, solvers have no incentive to submit the wrong result.



    

\section{Optimal Mechanism Structure}
\label{sec:analysis}

In this section, we analyze the timed commitment auction mechanism that maximizes revenue for the auctioneer. 
We find that \emph{in a large class of mechanisms with a natural structure, the optimal mechanism takes the form of a second price auction with a reserve price for the bidder and a winner-takes-all competition for the solver: see} \textbf{Theorem~\ref{thm:hbw}.}
Since this mechanism satisfies DSIC~\cite{vickrey1961counterspeculation} for the bidder, we will subsequently use $\vec{v}$ to replace the notation $\vec{b}$ and denote the payment to the winning solver as $V(\vec{v})$.





\subsection{Equilibrium Analysis}

We first characterize the BNE of solvers in the proposed mechanism.
Building on the work of Chawla et al.~\cite{chawla2012crowdsourcing}, we prove that when the solver's discount ability distribution satisfies the n-maximum-payment regularity condition, an extension of Myerson's lemma demonstrates the existence of a unique symmetric BNE.

\begin{definition}(n-maximum-payment-regular)
\label{def:reg}
For a given distribution $F$ with density function $f$ and an integer $n$, we define the maximum payment virtual value, $\psi_n(z)$, as
$$
\psi_n(z)=zF(z)^{n-1}-\frac{1-F(z)^n}{nf(z)}
$$
A distribution $F$ is said to be n-maximum-payment-regular if $\psi_n(\cdot)$ is a monotone non-decreasing function wherever it is non-negative.
\end{definition}

\begin{theorem}
\label{thm:bne}
In a winner-takes-all competition for solvers, when the distribution $F$ of each solver's discount ability is n-maximum-payment-regular, a unique symmetric BNE exists with strategy
$$
c(l) = \frac{\mathbb{E}_{\vec {v}}[V(\vec v)]}{\calT}(lx(l)-\int_{1}^{l} x(z)dz),
$$ 
where $l$ is solver's discount ability, 
$
x(l)=\begin{cases}F^{n-1}(l), &l\ge \psi_n^{-1}(0)\\0,&l<\psi_n^{-1}(0)\end{cases}
$ is the winning probability, $\calT$ is timed commitment difficulty factor, $\vec{v}$ is bidder's value, and $V(\vec{v})$ is the payment for winning solver.
\end{theorem}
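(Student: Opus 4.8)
The plan is to mirror Myerson's classical characterization of Bayes-Nash equilibrium in a single-parameter setting, adapted to the ``maximum payment'' structure that arises here because only the fastest solver is paid. First I would set up the single-parameter environment: a solver's type is the discount ability $l \in [1,\bar l]$, and since higher cost buys weakly more speed via the monotone $g$, choosing a larger $c$ in a winner-takes-all race translates monotonically into a higher chance of submitting first. Thus I would argue that in any symmetric equilibrium the strategy $c(\cdot)$ is non-decreasing, so solver $j$'s ex-ante winning probability, as a function of the ``effective cost'' she mimics, is $F^{n-1}$ evaluated at the type whose equilibrium cost she is imitating. This reduces the solver's problem to a standard one-dimensional allocation/payment pair $(x(l), c(l))$ where $x$ is the interim winning probability and the ``price'' the solver pays is $\calT c / l$ in expectation against the (random, but independent of her action) reward $\mathbb{E}_{\vec v}[V(\vec v)]$.

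Next I would write down the interim expected utility of a solver of type $l$ who deviates to the cost of type $l'$:
\begin{align*}
U(l; l') = \frac{\mathbb{E}_{\vec v}[V(\vec v)]}{\calT}\, x(l') \cdot \calT - \frac{\calT c(l')}{l} \cdot \frac{1}{1}
\end{align*}
— more cleanly, after normalizing by the expected reward, $U$ has the familiar form $l\cdot(\text{win prob}) - (\text{scaled cost})$, so the envelope/first-order conditions of Myerson apply verbatim. I would then invoke the standard two facts: (i) incentive compatibility is equivalent to $x(\cdot)$ monotone non-decreasing together with the payment (here, cost) identity obtained by integrating the envelope condition, which yields exactly $c(l) = \frac{\mathbb{E}_{\vec v}[V(\vec v)]}{\calT}\bigl(l\,x(l) - \int_1^l x(z)\,dz\bigr)$ once we fix the boundary condition $c$ at the lowest participating type to be $0$ (forced by individual rationality / the natural normalization that a type who never wins pays nothing); and (ii) the cutoff form of $x$, namely $x(l) = F^{n-1}(l)$ above a threshold and $0$ below, is the one consistent with symmetry and with the reward itself being determined endogenously. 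To pin the threshold to $\psi_n^{-1}(0)$ I would use the $n$-maximum-payment-regularity hypothesis exactly as Chawla et al.\ do: the auctioneer-side (or here, the self-consistency) optimization of the solver competition is governed by the virtual value $\psi_n(z) = zF(z)^{n-1} - \frac{1-F(z)^n}{nf(z)}$, and regularity guarantees that ironing is unnecessary, so the optimal/equilibrium cutoff is the unique zero of $\psi_n$, giving both existence and uniqueness of the symmetric BNE.

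The main obstacle I anticipate is the endogeneity of the prize: unlike a textbook all-pay auction with a fixed reward, here $V(\vec v)$ depends on the bids, and the solver does not know it when choosing $c$ — only its distribution. I would handle this by observing that, because bidders' strategies are independent of the solver competition (a point the paper stresses), $\vec v$ is stochastically independent of every solver's action, so the solver maximizes $\mathbb{E}_{\vec v}[V(\vec v)]\cdot x(l') - \calT c(l')/l$ and the random reward collapses to the constant $\mathbb{E}_{\vec v}[V(\vec v)]$ inside the optimization; the whole analysis then goes through with this constant in the role of Myerson's valuation scale, which is precisely why it appears as a prefactor in the stated strategy. A secondary technical point is verifying that the map from cost $c$ to interim win probability is continuous and strictly increasing on the relevant range (so that the inverse used in the envelope argument is well-defined) — this follows from $g$ being monotone non-decreasing together with atomlessness of $F$, though if $g$ has flat pieces one must argue that equilibrium costs avoid them or treat ties by a measure-zero argument. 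Finally I would check the boundary/participation condition and the second-order (monotonicity of $x$) condition to conclude that the constructed $(x,c)$ is not just a critical point but the unique symmetric equilibrium.
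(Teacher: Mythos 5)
Your proposal is correct and follows essentially the same route as the paper: both invoke Myerson's characterization (monotone interim allocation plus the integrated envelope/payment identity with zero cost at the lowest participating type) to obtain the stated $c(l)$, treat the endogenous prize as the constant $\mathbb{E}_{\vec v}[V(\vec v)]$ by independence, and verify best response directly. The only cosmetic difference is that the paper does not derive the cutoff $\psi_n^{-1}(0)$ inside this proof (it is imposed by the due time and justified later in the optimality theorem) and defers uniqueness to a separate theorem citing Chawla et al., whereas you fold both points into the equilibrium argument.
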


\begin{proof}


For any total payment function $V(\vec v)$ and interim payment allocation function $x(l)$ that is non-decreasing on $l$, there exists a symmetric BNE according to Myerson's Lemma \cite{myerson1981optimal}, where the strategy is:
$$
c(l) = \frac{\mathbb{E}_{\vec v}[V(\vec v)]}{\calT}(lx(l)-\int_{1}^l x(z)dz).
$$ 
To establish the existence of this symmetric BNE, we demonstrate that for solver 1, playing strategy $c_1 = c(l_1)$ constitutes the best response. Without loss of generality, we focus on the case where $l > \psi_n^{-1}(0)$.
The expected utility of solver 1 is given by
$$
u^s_1(c_1) = \mathbb{E}_{\vec v}[V(\vec v)] \cdot \mathbb{P}[c_1 \ge c(max_{j \ne 1} l_j)] - \frac{c_1 \calT}{l_1}.
$$
We now prove that $c_1 = c(l_1)$ maximizes solver 1's expected utility. 
Note that $c(l)$ for $l > \psi_n^{-1}(0)$ is continuous, non-decreasing in $l$, and strictly increasing on the support of $F$. Hence, there exists some $t$ such that $c_1 = c(t)$, and it suffices to show that $t = l_1$ maximizes solver 1's utility.

\begin{align}
\frac{l_1}{\calT} \cdot u^s_1(c_1) &= \frac{\mathbb{E}_{\vec v}[V(\vec v)]}{\calT} \cdot \mathbb{P}[c_1\ge c(max_{j \ne 1}l_j)] * l_1 - c_1\\
&= \frac{\mathbb{E}_{\vec v}[V(\vec v)]}{\calT} \cdot \mathbb{P}[t \ge max_{j \ne 1} l_j] \cdot l_1 - c(t) \\
&= \frac{\mathbb{E}_{\vec v}[V(\vec v)]}{\calT}(F^{n-1}(t) \cdot l_1-t x(t) + \int_{1}^t x(z)dz) \\
&= \frac{\mathbb{E}_{\vec v}[V(\vec v)]}{\calT}(F^{n-1}(t) \cdot (l_1-t) + \int_{\psi_n^{-1}(0)}^t F^{n-1}(z)dz).
\end{align}
When $t < l_1$, since the cumulative distribution function $F$ is non-decreasing, we have:
$$
\frac{l_1}{\calT} \cdot (u_1^s(c(l_1)) - u_1^s(c(t))) = \frac{\mathbb{E}_{\vec v}[V(\vec v)]}{\calT} (\int_{t}^{l_1} F^{n-1}(z)dz - F^{n-1}(t) \cdot (l_1-t)) \geq 0.
$$
Similarly, When $t > l_1$, we find:
$$
\frac{l_1}{\calT} \cdot (u_1^s(c(l_1)) - u_1^s(c(t))) = \frac{\mathbb{E}_{\vec v}[V(\vec v)]}{\calT} (F^{n-1}(t) \cdot (t-l_1) - \int_{l_1}^{t} F^{n-1}(z)dz) \geq 0.
$$
Therefore $c(l_1)$ maximizes solver 1's utility, and thus, $c(l_1)$ is the best response.
Consequently, the strategy profile defined by $c(l)$ constitutes the unique symmetric BNE in this mechanism.
\qed\end{proof}
We also mention in Theorem~\ref{thm:sym} that this is the unique BNE. 

\subsection{Mechanism Optimality}

We introduce some definitions of the solver payment rule, which help us prove BNE's uniqueness and design the optimal $V(\vec{v})$ function.

\begin{definition}(Payment-proportional).
    For a total payment function $V$, we call a payment rule $\calP^s$ is payment-proportional if the payment function satisfies:
    $$
    p_j^s(\vec b, \vec c) = V(\vec b) \cdot x_j(\vec c),
    $$
    where $(x_1(\vec c), \cdots, x_n(\vec c))$ is payment allocation function that $\sum_{j=1}^n x_j(\vec c) = 1$.
\end{definition}

\begin{definition}(Rank-based)
    A payment rule $\calP^s$ is rank-based if the function is only determined by $\vec b$ and the rank of $\vec c$. Specifically, it requires $\forall i, j \in [n]$,
    $$
    p_j^s(\vec b, \vec c) = p_i^s(\vec b, \vec c'),
    $$
    if the solver $j$'s rank in $\vec{c}$ is the same with solver $i$'s rank in $\vec c'$.
\end{definition}

Essentially, a rank-based payment rule predetermines the division of the payment into $y_1, \dots, y_n$, with $\sum_{j=1}^n y_j = 1$, where $y_j$ is given to the solver whose final speed rank is $j$-th. Rank-based is a natural property in competition design, as in reality, the auctioneer only knows the solver's rank rather than the exact solver's cost. The bidder's bid and the total reward function $V(\vec{v})$ determine the total payment to solvers. Solvers are ordered by their submission time and awarded the corresponding fraction of the payment; specifically, the $j$-th earliest gets a $y_j$ fraction of the total reward $V(\vec{v})$ if submitted before the due time.

We now prove that the BNE in Theorem~\ref{thm:bne} is the unique equilibrium by applying Theorem 3.1 of Chawla et al.~\cite{chawla2013auctions}.

\begin{theorem}
\label{thm:sym}
    When the solver payment rule is payment-proportional and rank-based, a unique BNE exists for solvers.
\end{theorem}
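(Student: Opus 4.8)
The plan is to reduce the solver competition to a single‑parameter all‑pay contest and then invoke the uniqueness theorem for such contests (Theorem~3.1 of~\cite{chawla2013auctions}), from which uniqueness of the solver BNE — and its coincidence with the strategy in Theorem~\ref{thm:bne} — follows.

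First I would strip away the dependence on the bidders. Since all bids are committed before the solving competition starts and the total reward $V(\cdot)$ is a function only of the revealed bids, the quantity $W := \mathbb{E}_{\vec v}[V(\vec v)]$ is a constant from every solver's point of view (the prior over $\vec v$ is common knowledge and independent of the solvers' cost choices). Using payment‑proportionality, solver $j$'s expected utility under any strategy profile becomes $W\cdot \bar x_j(\vec c) - \calT c_j / l_j$, where $\bar x_j$ is the expected share of the prize assigned to $j$. Using rank‑basedness together with monotonicity of $g$ — a higher $c_j$ makes the submission time $t_0 + \calT/g(c_j)$ earlier, so the rank of $c_j$ in $\vec c$ equals the rank of $j$'s submission time — this share depends on $\vec c$ only through ranks. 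Substituting $b_j := \calT c_j$ rewrites the game as: each solver posts a nonnegative bid $b_j$, incurs the sunk cost $b_j/l_j$, and receives a rank‑determined fraction of the prize $W$; ties among the $b_j$ occur with probability zero because $F$ is atomless. This is exactly a rank‑based all‑pay auction with the single‑crossing cost $b/l$.

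Next I would check that this reduced game meets the hypotheses of~\cite{chawla2013auctions}: the type distribution $F$ is atomless with density $f$ on $[1,\bar{l}]$, the cost is the standard linear single‑crossing form, and the required regularity holds — indeed $n$‑maximum‑payment‑regularity (Definition~\ref{def:reg}) is precisely monotonicity of the maximum‑payment virtual value $\psi_n$ wherever it is non‑negative, with non‑participation region $l<\psi_n^{-1}(0)$. The cited theorem then guarantees that every BNE of the reduced contest is symmetric, uses a non‑decreasing bid function, and is therefore pinned down by the Myerson payment identity, hence unique. Translating back through $c_j=b_j/\calT$ shows the solver game has a unique BNE, which must be the strategy $c(l)$ of Theorem~\ref{thm:bne}.

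The hard part will be the equilibrium‑set equivalence between the solver game and the reduced all‑pay contest, rather than any calculation: I need to argue that $V$ truly enters solvers' payoffs only through the constant $W$ (so the competition decouples from the auction), and that rank‑basedness plus monotonicity of $g$ let me replace $\vec c$ by its rank vector while correctly handling atoms and ties. A secondary point is verifying that our prize structure is admissible in~\cite{chawla2013auctions} — for a general rank‑based vector one may additionally want the prize fractions non‑increasing in rank, which is immediate in the winner‑takes‑all instantiation used by the protocol — and that the boundary behaviour at $l=1$ and at the threshold $\psi_n^{-1}(0)$ is exactly the one the cited theorem fixes, ruling out asymmetric equilibria or equilibria with an atom at zero effort.
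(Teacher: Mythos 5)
Your proposal follows essentially the same route as the paper: both reduce the solver competition to a rank-based all-pay contest (treating $\mathbb{E}_{\vec v}[V(\vec v)]$ as a constant prize and the i.i.d.\ discount ability as the type, with the cost $\calT c_j/l_j$ sunk regardless of rank) and then invoke Theorem~3.1 of Chawla et al.\ to conclude that only the symmetric equilibrium pinned down by the Myerson payment identity survives. Your write-up is in fact more explicit than the paper's about the change of variables, the decoupling from the bidder side, and the tie/atom issues, but the underlying argument is identical.
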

\begin{proof}
    Consider any strategy profile $\vec{s}$ that maps each solver's discount ability to a cost.
    Since the payment rule is payment-proportional and rank-based, if there exist two solvers, $i$ and $j$, with different strategies, the expected payment that solver $i$ receives when she either wins, loses, or ties with solver $j$ can be decomposed into three components:
    \begin{align*}
        \alpha_{W}(c_i) &= \mathbb{E}_{\vec{v}, \vec{l}_{-i}}[V(\vec{v}) \cdot x_i(c_i, \vec{s}_{-i}(\vec{l}_{-i})) ~|~ c_i > s_j(l_j)] & (\text{win})\\
        \alpha_{L}(c_i) &= \mathbb{E}_{\vec{v}, \vec{l}_{-i}}[V(\vec{v}) \cdot x_i(c_i, \vec{s}_{-i}(\vec{l}_{-i})) ~|~ c_i < s_j(l_j)] & (\text{lose})\\
        \alpha_{T}(c_i) &= \mathbb{E}_{\vec{v}, \vec{l}_{-i}}[V(\vec{v}) \cdot x_i(c_i, \vec{s}_{-i}(\vec{l}_{-i})) ~|~ c_i = s_j(l_j)]
        & (\text{tie})
    \end{align*}
    With these components, we can reduce our competition model to an all-pay auction model, then utilize Theorem 3.1 from Chawla et al.~\cite{chawla2013auctions}, which proves that all rank-based auctions with i.i.d bidders' value only have symmetric BNE. 
\qed\end{proof}



In the following theorem, we demonstrate how the auctioneer determines the optimal value function $V$ and the due time for optimal revenue.
The proof of this theorem builds upon the work of Chawla et al.~\cite{chawla2012crowdsourcing}.

\begin{theorem}\label{thm:hbw}
In mechanisms where the payment rule is rank-based and payment-proportional, the optimal competition mechanism for solvers is a winner-takes-all competition when the distribution $F$ of solver discount abilities is $n$-maximum-payment-regular and the cost-speed function $g$ is linear.
Specifically, when $g(c) = \beta \cdot c + \gamma$, the total value function and the optimal due time in the competition are given by:
\begin{equation}
    V(\vec{v}) = \frac{v_{(2)}}{2} - \frac{\gamma \cdot \calT \cdot v_{(2)}}{2n \beta \mathbb{E}_{l \sim F}[\psi_n(l)\overline x(l)]\cdot \mathbb{E}_{\vec v}[v_{(2)}]}, \quad t_d = \frac{\mathcal{T}}{g(c(\psi_n^{-1}(0)))},
    \label{formula:due-time}
\end{equation}
where $v_{(2)}$ is the second-highest value among the bidders (or the reserve price), and $\calT$ is the timed commitment difficulty factor, $
\overline x(l)=\begin{cases}F^{n-1}(l), &l\ge \psi_n^{-1}(0)\\0,&l<\psi_n^{-1}(0)\end{cases}
$ is the winning probability.

\end{theorem}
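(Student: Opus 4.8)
The plan is to combine the solver-side equilibrium analysis of Theorem~\ref{thm:bne} with a Myerson-style revenue optimization for the auctioneer, and then separately handle the bidder side via standard Vickrey arguments. First I would set up the auctioneer's objective: by Definition~\ref{def:auctioneer}, revenue is $\bigl(\sum_i p^b_i(\vec v) - V(\vec v)\bigr)\cdot \frac{g(\max \vec c)}{\calT}$, where the winning solver is the one with the largest cost (hence largest speed $g(\max\vec c)$). Since the mechanism is rank-based and payment-proportional, Theorem~\ref{thm:sym} tells us the solver equilibrium is the unique symmetric BNE of Theorem~\ref{thm:bne}, so I can substitute $c(l)$ and the induced winning probability $\overline x(l) = F^{n-1}(l)\cdot\mathbf{1}[l\ge\psi_n^{-1}(0)]$. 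Taking expectations over $\vec l$, a Myerson-type payment-identity / integration-by-parts argument converts $\mathbb{E}[c(l)]$ into $\mathbb{E}_{l\sim F}[\psi_n(l)\overline x(l)]\cdot \frac{\mathbb{E}_{\vec v}[V(\vec v)]}{\calT}$, exactly the "virtual cost" term appearing in the claimed formula.

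Next I would argue that the winning probability profile that maximizes the auctioneer's expected speed (equivalently minimizes expected delay) subject to $\sum_j y_j = 1$ and to the monotonicity/feasibility constraints of a rank-based payment-proportional rule is the all-to-the-fastest profile $y_1 = 1$, $y_j = 0$ for $j>1$ — i.e., winner-takes-all. The key observation is that concentrating the reward on the first solver both maximizes the expected speed $\mathbb{E}[g(\max\vec c)]$ (because higher reward per unit winning probability pushes $c(l)$ up, and the max cost is what matters) and is admissible precisely when $\psi_n$ is monotone where non-negative, which is the $n$-maximum-payment-regularity hypothesis; linearity of $g$, $g(c)=\beta c + \gamma$, is what makes $\mathbb{E}[g(\max\vec c)]$ an affine functional of $\mathbb{E}[c(l)]$ so the optimization stays tractable and the optimum is a corner. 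Plugging $g(c) = \beta c + \gamma$ and the winner-takes-all strategy $c(\psi_n^{-1}(0)) = 0$ into $t_d = \calT/g(c(\psi_n^{-1}(0)))$ gives the stated due time (the threshold type $\psi_n^{-1}(0)$ is the marginal solver who just breaks even, so any solver above it submits before $t_d$).

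Then I would optimize over the scalar $\mathbb{E}_{\vec v}[V(\vec v)]$: writing the auctioneer's expected revenue as $\bigl(\mathbb{E}_{\vec v}[\sum_i p^b_i(\vec v)] - \mathbb{E}_{\vec v}[V(\vec v)]\bigr)\cdot\frac{\beta \mathbb{E}[c(l^{(1)})] + \gamma}{\calT}$ and substituting $\mathbb{E}[c(l^{(1)})] = n\,\mathbb{E}_{l\sim F}[\psi_n(l)\overline x(l)]\cdot\frac{\mathbb{E}_{\vec v}[V(\vec v)]}{\calT}$ (the factor $n$ from symmetry over which solver wins), this becomes a quadratic in $\mathbb{E}_{\vec v}[V(\vec v)]$ whose maximizer, after also recognizing that DSIC on the bidder side forces $p^b$ to be the Vickrey-with-reserve rule so $\sum_i p^b_i(\vec v) = v_{(2)}$ (or the reserve), yields $V(\vec v) = \frac{v_{(2)}}{2} - \frac{\gamma\calT v_{(2)}}{2n\beta\,\mathbb{E}_{l\sim F}[\psi_n(l)\overline x(l)]\cdot\mathbb{E}_{\vec v}[v_{(2)}]}$, matching formula~(\ref{formula:due-time}). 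I would take $V(\vec v)$ proportional to $v_{(2)}$ (so that the first-order condition holds pointwise, which is what lets us pull $\mathbb{E}_{\vec v}[V(\vec v)]$ out cleanly); the second term is the "reserve-price-like" correction coming from the fixed speed offset $\gamma$.

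\textbf{Main obstacle.} The hard part will be justifying rigorously that winner-takes-all is optimal among \emph{all} rank-based payment-proportional mechanisms, rather than merely a natural candidate — this requires showing that the induced BNE strategy $c(l)$ is, in the relevant functional sense, monotone in how "top-heavy" the reward split $(y_1,\dots,y_n)$ is, and that the auctioneer's objective (expected max-speed net of expected total payment) is maximized at the extreme split. This is where the $n$-maximum-payment-regularity condition and the linearity of $g$ do the real work, and where I would lean most heavily on the machinery of Chawla et al.~\cite{chawla2012crowdsourcing}: their crowdsourcing-contest analysis gives exactly the reduction from rank-based contests to the virtual-value objective $\mathbb{E}[\psi_n(l)\overline x(l)]$, from which the corner-optimality of winner-takes-all follows by a pointwise domination argument on virtual values. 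The bidder-side DSIC/reserve claim is comparatively routine (Myerson/Vickrey), but care is needed to check that the solver payment $V(\vec v)$ depending on $v_{(2)}$ does not break the bidders' dominant strategy — it does not, since $V$ only scales the amount paid out to solvers and never feeds back into $p^b$ in a way that depends on a bidder's own report beyond what Vickrey already allows.
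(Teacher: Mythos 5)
Your proposal follows essentially the same route as the paper's proof: characterize the solvers' symmetric BNE via Theorems~\ref{thm:bne} and~\ref{thm:sym}, factor the expected revenue into $\mathbb{E}_{\vec v}[v_{(2)}-V(\vec v)]$ times an affine function of $\mathbb{E}_{\vec v}[V(\vec v)]$ using linearity of $g$, rewrite the expected maximum payment term as $n\,\mathbb{E}_{l\sim F}[\psi_n(l)\overline x(l)]$ by interchanging the order of integration so that winner-takes-all is a corner optimum under $n$-maximum-payment-regularity, and then optimize the resulting quadratic in $\mathbb{E}_{\vec v}[V(\vec v)]$ (the paper uses AM--GM) with $V$ taken proportional to $v_{(2)}$ for auctioneer IR. The only slip is your intermediate claim that $c(\psi_n^{-1}(0))=0$: the threshold type's equilibrium cost is $\frac{\mathbb{E}_{\vec v}[V(\vec v)]}{\calT}\,\psi_n^{-1}(0)\,F^{n-1}(\psi_n^{-1}(0))$, which is generally positive, though this does not affect the stated due-time formula $t_d=\calT/g(c(\psi_n^{-1}(0)))$.
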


\begin{proof}


For any rank-based payment-proportional solver payment rule $p_j^s(\vec b, \vec c) = V(\vec b) \cdot x_j(\vec c)$, we define the interim payment allocation function as $\overline{x}_j(l_j) = \mathbb{E}_{\vec l \sim F}[x_j(\vec c(\vec l))|l_j]$.
According to the proofs of Theorem~\ref{thm:bne} and Theorem~\ref{thm:sym}, the solver's strategy in the unique BNE is given by:
$$
c(l) = \frac{\mathbb{E}_{\vec{v}}[V(\vec{v})]}{\calT} \cdot (l\overline{x}(l)-\int_{1}^l \overline{x}(z)dz)
$$.

Assuming that $\vec v$ and $\vec l$ are ordered in descending order for the convenience of notation (e.g., $v_{(1)}$ represents the highest bidder value, and $l_{(1)}$ represents the solver's highest discount ability), the expected revenue of the auctioneer is expressed as:
\begin{align}
R &= \mathbb{E}_{\vec v,\vec l}[(v_{(2)} -V(\vec v)) \cdot g(c(l_{(1)}))] \nonumber\\
&= \mathbb{E}_{\vec v}[v_{(2)} -V(\vec v)] \cdot \mathbb{E}_{\vec v, \vec l}[g(c(l_{(1)}))] \nonumber\\
&=
\mathbb{E}_{\vec v}[v_{(2)} -V(\vec v)] \cdot E_{\vec l}\left[g\left(\frac{\mathbb{E}_{\vec {v}}[V(\vec v)]}{\calT}(l_{(1)}\overline x(l_{(1)})-\int_{1}^{l_1} \overline x(z)dz)\right)\right] \nonumber \\
&= \mathbb{E}_{\vec v}\left[v_{(2)} - V(\vec v)\right]\left(\beta \frac{\mathbb{E}_{\vec {v}}[V(\vec v)]}{\calT} 
\mathbb{E}_{\vec l}\left[l_{(1)}\overline x(l_{(1)})-\int_{1}^{l_{(1)}} \overline x(z)dz\right]+\gamma\right) \label{rev}
\end{align}

We set $\mathbf{MP} = \mathbb{E}_{\vec l}[l_{(1)}\overline x(l_{(1)})-\int_{1}^{l_{(1)}} \overline x(z)dz]$.
Since $\mathbf{MP}$ is independent of $\vec{v}$, we first characterize the optimal choice of $\overline{x}(l)$ that maximizes this function.

Given that $l \overline x(l)-\int_{1}^l \overline x(z)dz$ is monotone in $l$, the solver with the highest discount ability will always be the winner.
Thus, the expected maximum value of $\mathbf{MP}$ is:

\begin{align*}
\mathbf{MP}
&=n\int_{1}^{\bar{l}}\left[l\overline x(l)-\int_{1}^{l}\overline x(z)dz\right]\mathbf{Pr}_{\mathbf{l}_{-j}} \begin{bmatrix}l=l_{(1)}\end{bmatrix}f(l)d l  \\
&=n\int_{1}^{\bar{l}}\left[l\overline x(l)-\int_{1}^{l}\overline x(z)dz\right] F(l)^{n-1}f(l)d l.\\
&(\text{interchange the order of integration over $z$ and $l$})\\
&=n\int_{1}^{\bar{l}}l \overline x(l) F(l)^{n-1} f(l)d l-\int_{1}^{\bar{l}}\overline x(l)\left(\frac{1-F(l)^{n}}{n}\right)dl  \\
&=n\int_{1}^{\bar{l}}\left(lF(l)^{n-1}-\frac{1-F(l)^{n}}{n f(l)}\right)\times \overline x(l)f(l)dl \\
&=n\int_{1}^{\bar{l}}\psi_{n}(l)\overline x(l)f(l)dl \\
&=n\mathbf{E}_{l\sim F}[\psi_{n}(l)\overline x(l)].
\end{align*}

This result shows that when the entire payment is allocated to the solver with the largest (non-negative) virtual value, $\mathbf{MP}$ is maximized. 
For a $n$-maximum-payment regular distribution $F$, the virtual value function is monotonically increasing, implying that winner-takes-all competition is optimal.
This result also implies that the optimal due time is $t_d = \frac{\mathcal{T}}{g(c(\psi_n^{-1}(0)))}$.

Next, we calculate the optimal $V(\vec{v})$ that maximizes the auctioneer's revenue in a winner-takes-all competition.
Specifically, we need to maximize equation~\eqref{rev} with respect to $V(\vec{v})$. 
Using the Arithmetic Mean-Geometric Mean (AM-GM) inequalities, as long as $\gamma$ is not too large (i.e., $\gamma \leq \mathbb{E}_{\vec v}[v_{(2)}] \cdot \beta \cdot \mathbf{MP} / \calT$), the optimal function $V$ satisfies:
$$
\mathbb{E}_{\vec v}[V(\vec{v})] = \frac{\mathbb{E}_{\vec v}[v_{(2)}]}{2} - \frac{\gamma \cdot \calT}{2n \beta \mathbb{E}_{l \sim F}[\psi_n(l)\overline x(l)]}\ge 0.
$$
Further, to achieve IR for the auctioneer(namely, make the revenue nonnegative for every $\vec{v},\vec l$), just let 
$$
V(\vec{v}) = \frac{v_{(2)}}{2} - \frac{\gamma \cdot \calT \cdot v_{(2)}}{2n \beta \mathbb{E}_{l \sim F}[\psi_n(l)\overline x(l)]\cdot \mathbb{E}_{\vec v}[v_{(2)}]}.
$$
\qed\end{proof}
We also point out that this mechanism is IR for the auctioneer, and she also has another choice to publish a certain payment $\mathbb{E}_{\vec{v}}[V(v)]$ before the competition begins, which achieves ex-post IR for the auctioneer.
Additionally, we explore how to select an appropriate $V$ function to maximize the auctioneer's revenue in a winner-takes-all competition when the function $g$ is non-linear.
This result can be easily generalized from the proof of Theorem~\ref{thm:hbw}.

\begin{corollary}
    If $g$ is an increasing function, under the winner-takes-all solver payment rule, an optimal $V$ is decided by the following optimization problem:
    \begin{align}
    \max_{V} \mathbb{E}_{\vec{v}}[v_{(2)}-V(\vec{v})] \cdot \mathbb{E}_{l \sim F}[g(\frac{\mathbb{E}_{\vec{v}}[V(\vec{v})]}{\calT} \cdot (l_{(1)}x(l_{(1)})-\int_{1}^{l_{(1)}} x(z)dz))],
    \end{align}
    where $v_{(2)}$ is the second-highest value among the bidders (or the reserve price), and $l_{(1)}$ is the highest discount ability among the solvers.
\end{corollary}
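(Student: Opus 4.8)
The plan is to \emph{replay the equilibrium-and-revenue analysis of Theorem~\ref{thm:hbw} and simply stop before the step where linearity of $g$ is invoked}, since linearity entered only in the final closed-form computation. First I would note that Theorems~\ref{thm:bne} and~\ref{thm:sym} never use the shape of $g$: for any payment-proportional, rank-based solver payment rule --- and winner-takes-all is both --- solvers play the unique symmetric BNE $c(l)=\frac{\mathbb{E}_{\vec{v}}[V(\vec{v})]}{\calT}\bigl(l\,x(l)-\int_{1}^{l}x(z)\,dz\bigr)$, where $x$ is the interim winning probability of the winner-takes-all rule. Second, $l\,x(l)-\int_{1}^{l}x(z)\,dz$ is non-decreasing in $l$ (its derivative equals $l\,x'(l)\ge 0$, since $x$ is non-decreasing), so $c(\cdot)$ is non-decreasing and the solver with the largest discount ability $l_{(1)}$ is always the first to submit; hence in equilibrium $\max\vec{c}=c(l_{(1)})$ and the entire reward $V(\vec{v})$ goes to that solver.

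Plugging these facts into the auctioneer's revenue (Definition~\ref{def:auctioneer}), and using that the bidder payments sum to $v_{(2)}$ under the second-price-with-reserve rule, gives $R=\mathbb{E}_{\vec{v},\vec{l}}\bigl[(v_{(2)}-V(\vec{v}))\,g(c(l_{(1)}))\bigr]$ up to a positive constant. Next I would use that the bidder values $\vec{v}$ and the solver discount abilities $\vec{l}$ come from independent priors to factor this as $\mathbb{E}_{\vec{v}}[v_{(2)}-V(\vec{v})]\cdot\mathbb{E}_{\vec{l}}\bigl[g(c(l_{(1)}))\bigr]$, and then substitute the explicit BNE strategy for $c(l_{(1)})$; the result is precisely the displayed objective, so an optimal $V$ is any maximizer of that program. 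It is worth recording that $c(l_{(1)})$ depends on $V$ only through the scalar $\mathbb{E}_{\vec{v}}[V(\vec{v})]$, so the program collapses to a one-dimensional maximization over $\bar{V}:=\mathbb{E}_{\vec{v}}[V(\vec{v})]$ (after which $V(\vec{v})=\bar{V}\cdot v_{(2)}/\mathbb{E}_{\vec{v}}[v_{(2)}]$ restores ex-post IR for the auctioneer), and specializing to $g(c)=\beta c+\gamma$ and applying AM-GM recovers the closed form of Theorem~\ref{thm:hbw}.

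The point that needs care --- more a caveat than a real obstacle --- is that, unlike the linear case, $g$ cannot be pulled through the expectation $\mathbb{E}_{\vec{l}}[\cdot]$, so the inner optimization over the allocation rule $\overline{x}$ no longer decouples into maximizing $\mathbf{MP}=n\,\mathbb{E}_{l\sim F}[\psi_n(l)\overline{x}(l)]$; this argument therefore does \emph{not} re-derive that winner-takes-all is optimal for a general increasing $g$, which is exactly why the statement is conditioned on the winner-takes-all payment rule. Given that structural choice, the reduction above pins down the optimal $V$, with the due time set consistently with the induced equilibrium (reducing to $t_d=\calT/g(c(\psi_n^{-1}(0)))$ in the linear special case).
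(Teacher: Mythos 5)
Your proposal is correct and follows essentially the same route as the paper, which offers no separate argument for the corollary beyond noting that it "can be easily generalized from the proof of Theorem~\ref{thm:hbw}": you replay that proof up to the revenue factorization $\mathbb{E}_{\vec v}[v_{(2)}-V(\vec v)]\cdot\mathbb{E}_{\vec l}[g(c(l_{(1)}))]$ and stop before linearity of $g$ is invoked, exactly as intended. Your explicit caveat that winner-takes-all optimality is \emph{not} re-derived for nonlinear $g$ (and is instead assumed in the hypothesis) is a correct and worthwhile clarification of the corollary's scope.
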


Essentially, the property of payment-proportionality separates the influence of the bidder and solver in the mechanism. 
Although it is natural to design such payment rules, it is also worth studying whether the auctioneer can achieve a strictly larger revenue with a non-payment-proportional payment rule. We leave this as an open question for future study.

\subsection{Impossibility of two-sided DSIC}


In the previous analysis, we only required that the solver satisfies the properties of Bayesian Incentive Compatibility (BIC) and Individual Rationality (BIR). 
A natural question arises: Is there a mechanism that satisfies DSIC and IR for both bidders and solvers? 
We prove in Theorem~\ref{thm:two-side-DSIC} that no such mechanism can generate positive revenue for the auctioneer when the bidder's value can be zero. 
The insight of the proof is that because there exists a scenario where the highest value of the bidder is arbitrarily small, the only dominant strategy of Nash equilibrium is that no solver solves the timed commitment.

\begin{theorem}
\label{thm:two-side-DSIC}
    When the bidder's value can be zero and the cost-speed function $g(0) = 0$, no two-sided DSIC and IR mechanism that satisfies auctioneer IR can achieve positive auctioneer revenue.
\end{theorem}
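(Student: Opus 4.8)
\medskip

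\noindent\textbf{Proof proposal.}
The plan is to argue that any such mechanism forces the solvers' dominant-strategy cost profile to lie in the zero-set of $g$, so the factor $g(\max \vec c)/\calT$ in the auctioneer's revenue vanishes identically. The engine is the degenerate instance in which every bidder has value $0$. I would fix a mechanism $\langle\calA,\calP^{b},\calP^{s}\rangle$ satisfying DSIC and IR for both bidders and solvers together with auctioneer IR, and let $b^{\star}(\cdot)$, $c^{\star}(\cdot)$ be the bidders' and solvers' dominant strategies. Consider the instance with $v_i=0$ for all bidders, and write $\vec b_0=b^{\star}(\vec 0)$ for the induced bid profile; let $\vec l$ be any solver type profile with induced cost profile $\vec c=c^{\star}(\vec l)$. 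Bidder DSIC together with bidder IR implies a value-$0$ bidder playing $b^{\star}$ has utility $a_i(\vec b_0)\cdot 0 - p^{b}_i(\vec b_0) = -p^{b}_i(\vec b_0)\ge 0$, so $\sum_i p^{b}_i(\vec b_0)\le 0$. Solver DSIC together with solver IR (read ex post in the realized $(\vec b,\vec c)$, which is the reading under which $u^{s}_j$ is a function of $(\vec b,\vec c)$) implies $p^{s}_j(\vec b_0,\vec c)\ge \calT c_j/l_j\ge 0$, so $\sum_j p^{s}_j(\vec b_0,\vec c)\ge 0$.

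Next I would feed these into auctioneer IR, $R(\vec b_0,\vec c)=\bigl(\sum_i p^{b}_i(\vec b_0)-\sum_j p^{s}_j(\vec b_0,\vec c)\bigr)\cdot g(\max\vec c)/\calT\ge 0$. The bracket is $\le 0$ by the previous step, so if $g(\max\vec c)>0$ the bracket must equal $0$; then $0\ge\sum_i p^{b}_i(\vec b_0)=\sum_j p^{s}_j(\vec b_0,\vec c)\ge 0$ forces both sums to vanish, hence $p^{s}_j(\vec b_0,\vec c)=0$ for every $j$, hence (using $p^{s}_j\ge \calT c_j/l_j$ and $c_j\ge 0$, $l_j\ge 1$) $c_j=0$ for every $j$, hence $\max\vec c=0$ and $g(\max\vec c)=g(0)=0$, contradicting $g(\max\vec c)>0$. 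Therefore $g(\max c^{\star}(\vec l))=0$ for every solver type profile $\vec l$.

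Finally I would note that the solvers commit their costs before the aggregated commitment is opened, so $c^{\star}$ depends only on the discount abilities $\vec l$ and not on the realized bids; hence the identity $g(\max c^{\star}(\vec l))=0$ holds for every bidder instance as well. Consequently, for every realization $(\vec v,\vec l)$ the revenue $R=\bigl(\cdots\bigr)\cdot g(\max\vec c)/\calT$ is $0$, so the expected revenue is $0$, which is not positive. The step I expect to be the main obstacle is the second paragraph's use of solver DSIC/IR: one must justify evaluating the mechanism at the single bid profile $\vec b_0$, i.e. argue that the solver incentive and participation constraints bind ex post in the bids (equivalently, that a value-$0$ bidder instance genuinely arises and pins down $c^{\star}$); once that is in place, the remainder is sign-chasing with the constraint $g(0)=0$.
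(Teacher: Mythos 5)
Your proof is correct and follows essentially the same route as the paper's: both exploit a (near-)zero-value bidder instance, the fact that solvers must commit to costs without seeing the sealed bids, and the chain of bidder-IR, solver-IR, and auctioneer-IR constraints to force every solver's cost to zero and hence $g(\max\vec c)=g(0)=0$. Your deterministic all-zero instance and the explicit propagation of $c^{\star}(\vec l)$ across bidder instances make the sign-chasing tighter than the paper's probabilistic sketch (which argues via the event that all $m$ values fall below the solver's cost), but the underlying idea is identical.
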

\begin{proof}
    Assume there exists a mechanism $M$ that is a two-sided DSIC mechanism and achieves positive expected revenue for the auctioneer. In such a mechanism, there must be a scenario where a solver solves the timed commitment at a positive cost $c$. However, the solver has no knowledge of the exact bids from the bidders.
    Let $\sigma$ represent the probability that a bidder's value $v$ is less than $c$. With positive probability $\sigma^m$, the values of all bidders are smaller than $c$. Then, with positive probability, the sum of the utilities of both the bidders and the solver is negative, violating both-sided IR.

    Therefore, the only Nash equilibrium in mechanism $M$ is that no solver will solve the timed commitment, leading to zero revenue for the auctioneer.
\qed\end{proof}

\section{Related Work}

\noindent\textbf{Blockchain Seal-bid Auction.}
Several papers have proposed sealed-bid auction blockchain protocols. Some focus on bid privacy but do not ensure auction completion~\cite{braghin2020designing,chen2022sbrac,krol2020pastrami,lu2021blockchain,blass2018strain}, while others ensure completion but rely on off-chain semi-trusted auctioneers~\cite{constantinides2021block,desai2019hybrid,galal2019verifiable,sharma2021anonymous} or trusted execution environments~\cite{desai2021secauctee,galal2020trustee}. To address both bid privacy and auctioneer credibility, Tyagi et al.~\cite{tyagi2023riggs} proposed the Riggs protocol, which utilizes a non-malleable timed commitment scheme with range proofs, enabling penalties for abandoned bids and allows simultaneous bids in multiple auctions. Cicada~\cite{glaeser2023cicada} also offers a comprehensive private on-chain auction framework, using aggregate homomorphic time-lock puzzles to protect bidder privacy. 
Sealed-bid auctions have also been studied on consortium blockchain~\cite{xiong2019anonymous} and quantum-based blockchain~\cite{abulkasim2021quantum}.

\noindent\textbf{Timed Commitment.}
Timed commitment (or time-lock puzzle) was first proposed by Rivest et al.~\cite{rivest1996timelock}, using an RSA-based assumption to create time-lock puzzles through repeated squaring. 
Various time-lock puzzles with different properties have since been introduced, such as homomorphic time-lock puzzles~\cite{malavolta2019homomorphic}, fully homomorphic time-lock puzzles~\cite{brakerski2019leveraging}, non-malleable time-lock puzzles~\cite{freitag2021nonmalleadle}, sequential time-lock puzzles~\cite{chvojka2021versatile}, and the ``Efficient Delegated Time-Lock Puzzle'' (ED-TLP)~\cite{abadi2023delegated}. 
The hardness of the underlying assumption, the sequential-squaring conjecture, is studied in~\cite{katz2020security}, and the applications of timed commitment are discussed in~\cite{boneh2000timed,thyagarajan2020verifiable,choi2023bicorn}. 
Furthermore, techniques like group homomorphism in hidden-order groups~\cite{bangerter2005efficient,boneh2019batching,castagnos2020bandwidth} have been employed on non-malleable timed commitments, which efficiently construct range proof for committed value~\cite{thyagarajan2021efficient}.

\noindent\textbf{Outsourcing of Computation Task.}
The outsourcing of computation tasks has been extensively studied, but the focus has primarily been on the security and efficiency of outsourcing, with little attention given to the issue of outsourcing pricing~\cite{gennaro2010non,wang2012harnessing,shan2018practical}. 
The emergence of blockchain has enabled decentralized computation outsourcing. 
In 2021, Thyagarajan et al.~\cite{thyagarajan2021opensquare} introduced OpenSquare, a blockchain-based smart contract that allows clients to outsource repeated modular squaring computations to computationally powerful servers. 
This system is designed to be Sybil-resistant and proven incentive-compatible through game-theoretic analysis. 
However, some studies suggest that OpenSquare may be vulnerable to collusion attacks~\cite{abadi2023decentralised}. 
Another type of computation task that may require outsourcing is the generation of zero-knowledge (ZK) proofs, which is explored in the mechanism proposed by Wang et al.~\cite{wang2024mechanism}.

\section{Conclusion and Future Directions} 
In this paper, we propose a comprehensive game-theoretical model for blockchain sealed-bid auctions with a timed commitment outsourcing market. 
We introduce an aggregatable timed commitment scheme that efficiently aggregates multiple timed commitments into a single one while maintaining security, and it is suitable for auctions with varying numbers of bidders.
Building on this, we design a mechanism for blockchain timed commitment auctions that combines a second-price auction with a reserve price on the auctioneer-bidder side, and a winner-takes-all competition on the auctioneer-solver side. 
We demonstrate that this mechanism satisfies DSIC and IR for bidders, BIR and BIC for solvers, while achieving optimal revenue for the auctioneer among all mechanisms that are payment-proportional and rank-based. Furthermore, we prove that when auctioneer IR is required, no mechanism can simultaneously satisfy DSIC and IR for both bidders and solvers and still guarantee positive revenue for the auctioneer.

An intriguing open question is whether there exists a mechanism that is not payment-proportional and allows the auctioneer to secure even greater revenue.
Another interesting question is
whether there exists a mechanism which is BIC for both bidders and solvers and achieves a higher revenue for the auctioneer.
Notice that the revenue difference between DSIC and BIC mechanisms is an important question for single-sided auctions, and here it is worth pursuing for our two-sided game as well.


%
%
%
\bibliographystyle{splncs04}
\bibliography{ref}

\end{document}